\def\arXiv#1{\href{http://arxiv.org/abs/#1}{arXiv:#1}}
\def\?[#1]{\textbf{[#1]}\marginpar{\Large{\textbf{??}}}}
\def\smallsection#1{\smallskip\noindent\textbf{#1}.}
\let\epsilon=\varepsilon 
\newcommand{\RR}{{\mathbb R}}
\newcommand{\NN}{{\mathbb N}}
\newcommand{\CC}{{\mathbb C}}
\newcommand{\TT}{{\mathbb T}}
\newcommand{\ZZ}{{\mathbb Z}}
\newtheorem{theo}{Theorem}
\newtheorem{prop}{Proposition}[section]	
\newtheorem{defi}[prop]{Definition}
\newtheorem{ex}{Example}
\newtheorem{assumption}{Assumption}
\newtheorem{corr}[prop]{Corollary}
\newtheorem{rem}[prop]{Remark}
\numberwithin{equation}{section}
\DeclareMathOperator{\Spec}{Spec}
\DeclareMathOperator{\loc}{loc}
\DeclareMathOperator{\rank}{rank}
\let\Re=\Real
\DeclareMathOperator{\tr}{tr}
\def\indic{\operatorname{1\hskip-2.75pt\relax l}}
\newcommand\reallywidehat[1]{\arraycolsep=0pt\relax%
\begin{array}{c}
\stretchto{
  \scaleto{
    \scalerel*[\widthof{\ensuremath{#1}}]{\kern-.5pt\bigwedge\kern-.5pt}
    {\rule[-\textheight/2]{1ex}{\textheight}} 
  }{\textheight} %
}{0.5ex}\\           
#1\\                 
\rule{-1ex}{0ex}
\end{array}
}
\author[S.\,Becker]{Simon Becker}
\address[Simon Becker]{ETH Zurich, 
Institute for Mathematical Research, 
R\"amistrasse 101, 8092 Zurich, 
Switzerland}
\email{simon.becker@math.ethz.ch}
\author[Z.\,Tao]{Zhongkai Tao}
\address[Zhongkai Tao]{Department of Mathematics, University of California, Berkeley, CA 94720, USA}
\email{ztao@math.berkeley.edu}
\author[M.\,Yang]{Mengxuan Yang}
\address[Mengxuan Yang]{Department of Mathematics, University of California, Berkeley, CA 94720, USA}
\email{mxyang@math.berkeley.edu}
\title[Fragile topology]{Fragile topology on solid grounds: a mathematical perspective}
\begin{document}
\begin{abstract}
This paper provides a mathematical perspective on \emph{fragile topology} phenomena in condensed matter physics. In dimension $d \le 3$, vanishing Chern classes of bundles of Bloch eigenfunctions characterize operators with exponentially localized Wannier functions (these functions form convenient bases of spectrally determined subspaces of $L^2$).  However, for systems with additional symmetries, such as the $C_{2}T$ (space-time reversal) or the $PT$ (parity-time) symmetry, a set of exponentially localized Wannier functions compatible with such symmetry may not exist. We show that for rank 2 Bloch bundles with such symmetry, non-trivial Euler classes are obstructions to constructing exponentially localized compatible Wannier functions. We also show that this obstruction can be lifted by adding additional Bloch bundles with the symmetry, even though the Stiefel--Whitney class of the total bundle is non-trivial. This allows a construction of exponentially localized Wannier functions compatible with the symmetry and that is referred to as {\em topological fragility}.
\end{abstract}
\maketitle

\section{Introduction}
In condensed matter physics it is commonly accepted that topological features of Bloch bundles, such as the Hall conductance and the localization of Wannier functions, are robust under topologically trivial perturbations of the system. However, in the presence of certain symmetries, the so-called fragile topology discovered by Po--Watanabe--Vishwanath \cite{PWV18} challenges this narrative. Consider a periodic Hamiltonian invariant under the symmetry 
\begin{equation}
    \label{eq:I}
    Iu(x) =\overline{u(-x)}.
\end{equation} 
This symmetry induces a real sub-bundle $\mathcal{E}_{\RR}$ of the Bloch bundle $\mathcal E_{\CC}$ (cf.~equation \eqref{eq:bb}) of the periodic Hamiltonian. If ${\mathcal{E}_{\RR}}$ is of rank two with the Euler class $e_1({\mathcal{E}_{\RR}}) \neq 0$, there is no exponentially localized Wannier basis compatible with the $I$-symmetry (cf.~equation~\eqref{eq:compatible}).  
This topological phenomenon is considered {\em fragile}  because adding topologically trivial bands allows for the existence of an exponentially localized Wannier basis compatible with the $I$-symmetry. This topological feature has received attention because it prominently appears in twisted bilayer graphene \cite{LZSV19} and its flat bands in relation to strongly correlated electron phenomena \cite{seb,PSBH19}.

In this article, we give a mathematical characterization of fragile topology by proving a more general result stating that any real Bloch bundle $\mathcal{E}_{\RR}$ over $\TT^2$ or $\TT^3$ with rank $\geq 3$ induced by the $I$-symmetry admits an exponentially localized Wannier basis compatible with the $I$-symmetry -- see Theorem \ref{theo:main2}. In particular, increasing the rank of any rank 2 Bloch bundle destroys the topological obstruction to exponentially localized compatible Wannier bases. Our results also extend previous results by Po--Watanabe--Vishwanath \cite{PWV18} and Ahn--Park--Yang \cite{apy19} from two dimensional to three dimensional periodic systems. 
The nonexistence result in the rank 2 case is presented in Theorem \ref{theo:main1}.

We start by introducing some main concepts. 
\begin{assumption}
\label{ass:proj}
Let $\Gamma$ be a lattice in $\mathbb R^d$ and $\mathcal P:=(P(k))_{k \in \mathbb R^d}$ be a family of orthogonal projections with finite constant rank $r$ acting on the Hilbert space $\mathcal H = L^2(\mathbb R^d / \Gamma)$ that depends real analytically on the parameter $k \in \mathbb R^d$ and satisfies for all $k \in \mathbb R^d$ and $\gamma \in \Gamma^*$
\begin{equation}
\tag{A1}
\label{eq:t-equivariance}
P(k+\gamma) =\tau(\gamma)^{-1}P(k)\tau(\gamma), \quad \tau \in C^{\omega}( \mathbb R^d , U(\mathcal H) )\text{ a unitary operator}. 
\end{equation}
\end{assumption}
The most prominent examples of $P(k)$ are spectral projections to $r$ bands gapped from the rest of the bands for periodic Hamiltonians and $\tau(\gamma):=e^{-i\langle \gamma,x \rangle}$. 
A family of projections in Assumption \ref{ass:proj} induces a vector bundle via the equivalence relation
\[ (k,\varphi) \sim (k',\varphi') \in \RR^d\times\operatorname{ran}P(\cdot)  \Longleftrightarrow (k',\varphi')=(k+\gamma,\tau(\gamma)\varphi), \  \gamma \in \Gamma^*,\]
with the total space 
\begin{equation}
    \label{eq:bb}
    E:=\{ (k,\varphi) \in \mathbb R^d \times \operatorname{ran}(P(k))\} /{\sim}
\end{equation}
and the base space $B:=\mathbb R^d/\Gamma^*.$ Thus the projection map $\pi: E\to B$ induces a complex vector bundle $\mathcal E_{\CC}$ (cf.~Definition \ref{def:vb}), which is called the \textbf{Bloch bundle}. 

\begin{defi}
\label{def:wb}
A \emph{Wannier basis} of the Bloch bundle $\mathcal{E}_{\CC}$ is a set of orthonormal functions
\begin{equation}
    \label{eq:wb}
     \mathcal W := \bigcup_{a=1}^r \bigcup_{\gamma\in\Gamma} \big\{ \psi_a ( x - \gamma ) \big\} 
\end{equation}
such that $\mathcal W$ is an orthonormal basis of the range of the projection $\mathcal{P}$ with rank $r$.  
\end{defi}
In quantum mechanics, the existence of exponentially decaying Wannier basis is of particular interest, as this allows tight-binding constructions for the corresponding Hamiltonians. 
It is well-known that the existence of exponentially decaying Wannier functions is equivalent to the topological triviality of the Bloch bundle \cite{ne,P07,MPPT18}. 

In contrast, for Bloch bundles satisfying the $I$-symmetry in \eqref{eq:I}, we show that it may allow the existence of exponentially decaying Wannier functions compatible with the symmetry, even when the underlying symmetry induced \emph{real} Bloch bundle is topologically non-trivial. We introduce  
\begin{assumption}
\label{ass:I}
The projections $P(k)$ in Assumption \ref{ass:proj} satisfy 
\begin{equation}
\tag{A2}
    IP(k)=P(k)I.
\end{equation}
\end{assumption}
This symmetry is also well-known as $\mathcal{C}_{2}\mathcal{T}$ or $\mathcal{PT}$-symmetry in quantum mechanics. In particular, it holds for a large class of Hamiltonians, e.g.,~Bloch transformed Schr\"odinger operators $H_k= (-i\nabla-k)^2 + V$ with $V(-x)=\overline{V(x)}$ and the Bistritzer--MacDonald Hamiltonian of twisted bilayer graphene \cite{BM11}. For Bloch bundles with the symmetry, it is natural to study the existence of exponentially localized Wannier functions invariant under the $I$-symmetry. To this end, we introduce
\begin{defi}
\label{def:compatible}
    We say a Wannier basis $\mathcal{W}$ is \emph{compatible with the $I$-symmetry} if
    \begin{equation}
    \label{eq:compatible}
        I \mathcal W = \mathcal W.
    \end{equation}
\end{defi}
The $I$-symmetry acts as a real vector bundle homomorphism (i.e., each fiber is preserved). Since $I^2=1$, it has eigenvalue $1$ and $-1$ on each fiber. Hence, the $I$-symmetry naturally induces a real subbundle of the Bloch bundle
$$ \mathcal{E}_{\RR}:=\{(k,v)\in \mathcal{E}_{\mathbb{C}}: I v=v\},$$ 
which is
given by the eigenspace of $1$. This real subbundle $\mathcal{E}_{\RR}$ is closely related to the existence of exponentially decaying Wannier basis compatible with the $I$-symmetry: 

\begin{theo}[Wannier obstruction]
\label{theo:main1}
Let $\mathcal E_{\mathbb R}$ be a rank $2$ real orientable subbundle of the Bloch bundle induced by the $I$-symmetry with base dimension $\le 3$. The following are equivalent:
\begin{enumerate}
    \item \label{part1} The Euler class of $\mathcal{E}_{\RR}$ vanishes.
    \item \label{part2} $\mathcal{E}_{\RR}$ admits an exponentially localized Wannier basis compatible with the $I$-symmetry, i.e., for some $\beta>0$, we have
\begin{equation}
    \sum_{a=1}^2\int_{\RR^d} e^{2\beta|x|}|\psi_a(x)|^2 dx<+\infty.
\end{equation}
\item \label{part3} There exists a Wannier basis compatible with the $I$-symmetry satisfying the weaker decay condition
    \begin{equation}\label{eq:main1-2}
      \sum_{a=1}^2\int_{\RR^d} |x|^2|\psi_a(x)|^2 dx<+\infty.  
    \end{equation}
\end{enumerate} 

\end{theo}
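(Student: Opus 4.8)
\emph{Strategy.} The plan is to transport the three conditions to the Bloch picture over the torus $B=\RR^d/\Gamma^*$ and then close the loop $(2)\Rightarrow(3)\Rightarrow(1)\Rightarrow(2)$. Via the ($I$-symmetry refined) Bloch--Floquet transform, an $I$-compatible Wannier basis is the same datum as a $\tau$-equivariant orthonormal frame $(u_1(k),u_2(k))$ of the real subbundle $\mathcal E_{\RR}$, with the standard dictionary: the exponential bound in $(2)$ holds for some $\beta>0$ if and only if the $u_a$ admit a holomorphic, $\tau$-equivariant extension to the complex strip $\{\,k\in\CC^d:|\Im k|<\beta\,\}$ (Paley--Wiener), while the quadratic-moment bound in $(3)$ holds if and only if $u_a\in H^1_{\mathrm{loc}}(\RR^d)$. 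Granting this dictionary, $(2)\Rightarrow(3)$ is immediate from $|x|^2\le C_\beta e^{2\beta|x|}$.

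For $(3)\Rightarrow(1)$: since $\mathcal E_{\RR}$ is orientable of rank $2$, the metric together with the orientation turns it into a complex line bundle $L$ (multiplication by $i$ being rotation by $\pi/2$), under which $e_1(\mathcal E_{\RR})=c_1(L)\in H^2(B;\ZZ)$ and a unit section of $\mathcal E_{\RR}$ is the same as a nowhere-vanishing section of $L$. Hence $(3)$ provides $L$ with a global section $u_1$ with $|u_1|\equiv 1$ and $u_1\in H^1_{\mathrm{loc}}$. If $d=1$ then $H^2(B;\ZZ)=0$ and there is nothing to prove; if $d\in\{2,3\}$ the class $c_1(L)$ is detected by its pairings with the $2$-subtori generating $H_2(B;\ZZ)$, and along a generic parallel slice the section $u_1$ restricts, by Fubini, to an $H^1$ unit section of the restricted bundle, so it suffices to treat the case of a $2$-torus $\Sigma$. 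Here one uses that in two dimensions an $H^1$ map into $S^1$ admits, locally, a real $H^1$ logarithm; consequently, fixing an auxiliary smooth Hermitian connection $\nabla$ on $L$ with curvature $F$, the Berry one-form $\alpha=\langle\nabla u_1,u_1\rangle$ is a globally defined, purely imaginary $L^2$ one-form whose distributional exterior derivative equals $F$, so $\int_{\Sigma}F=\int_{\Sigma}d\alpha=0$ by Stokes' theorem (valid for $L^1$ forms with $L^1$ differential on a closed surface), and therefore $c_1(L)=\tfrac{1}{2\pi i}\int_{\Sigma}F=0$. Alternatively one may quote the sharp decay-versus-topology results for Wannier functions, e.g.~\cite{MPPT18}, which already preclude a finite quadratic moment once the invariant is non-zero.

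For $(1)\Rightarrow(2)$: vanishing of $e_1(\mathcal E_{\RR})$ makes $\mathcal E_{\RR}$ trivial as an oriented plane bundle, since oriented rank-$2$ real bundles over a CW complex are classified by their Euler class ($BSO(2)\simeq K(\ZZ,2)$); thus $\mathcal E_{\RR}$ carries a continuous, $\tau$-equivariant orthonormal frame taking values in $\mathcal E_{\RR}$. It remains to upgrade this topological frame to a real-analytic one that extends holomorphically to a strip. Because $P(k)$ is real-analytic and $\Gamma^*$-periodic up to $\tau$, it extends to a holomorphic family on a complex neighbourhood of $\RR^d$, and $\mathcal E_{\RR}$ is spanned locally by real-analytic frames obtained by Kato's analytic transport, with real-analytic $SO(2)\cong S^1$-valued transition functions; over $B$ with $d\le 3$, topological triviality of this $S^1$-bundle upgrades to triviality in the real-analytic category, producing a global real-analytic, $\tau$-equivariant unit section of $\mathcal E_{\RR}$. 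Lying in the range of the holomorphically continued $P(k)$, this section extends to $\{|\Im k|<\beta\}$, and Paley--Wiener turns the extension into the exponential bound of $(2)$. Up to the bookkeeping needed to keep the frame inside $\mathcal E_{\RR}$ and $\tau$-equivariant, this is the $I$-equivariant refinement of the classical equivalence between triviality of the Bloch bundle and existence of exponentially localized Wannier bases \cite{ne,P07,MPPT18}, available precisely because $d\le 3$.

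The step I expect to be the main obstacle is the low-regularity part of $(3)\Rightarrow(1)$: guaranteeing that a non-zero integer topological invariant genuinely obstructs the existence of a merely $H^1$ (as opposed to continuous) trivializing frame, i.e.\ controlling the curvature integral of an $H^1$ Bloch frame. Once that regularity threshold is handled, the remaining implications are the classical Bloch-bundle arguments carried out equivariantly with respect to $I$, the only new point in $(1)\Rightarrow(2)$ being the preservation of the reality constraint $I u_a=u_a$ throughout the analytic smoothing.
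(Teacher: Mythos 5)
Your overall route coincides with the paper's: reduce everything to the Bloch picture, classify oriented rank--$2$ real bundles by the Euler class via $BSO(2)\simeq K(\ZZ,2)$, use analytic triviality plus Paley--Wiener for \eqref{part1}$\Rightarrow$\eqref{part2}, and use an MPPT18-type $H^1$ regularity/curvature argument for \eqref{part3}$\Rightarrow$\eqref{part1}. However, there is one genuine gap: your opening ``dictionary'' asserts that an $I$-compatible Wannier basis is the same datum as a $\tau$-equivariant orthonormal frame of $\mathcal{E}_{\RR}$ itself. That is not what compatibility means. The condition $I\mathcal{W}=\mathcal{W}$ is a condition on the \emph{set} of translated Wannier functions, and after the normalization of Proposition \ref{def:WB-c} it only forces $I\psi_{a,\gamma}=\psi_{a,\mathfrak{c}_a-\gamma}$ for some lattice points $\mathfrak{c}_a\in\Gamma$ (the rescaled Wannier centers). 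On the Bloch side this reads $I\varphi_a(k)=e^{i\langle\mathfrak{c}_a,k\rangle}\varphi_a(k)$, so each $\varphi_a$ is a section not of $\mathcal{E}_{\RR}$ but of the twisted bundle $\mathcal{E}_{\mathfrak{c}_a}\cong\mathcal{E}_{\RR}\otimes\mathcal{L}_{\mathfrak{c}_a}$, where $\mathcal{L}_{\mathfrak{c}_a}$ is the real line bundle with first Stiefel--Whitney class $\mathfrak{c}_a\in\Gamma/2\Gamma$. Your proof of \eqref{part3}$\Rightarrow$\eqref{part1} therefore only excludes the special compatible bases with $\mathfrak{c}_a=0$, which is strictly weaker than the theorem.

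The gap is repairable, and the paper's argument shows how: a unit $H^1$ section of $\mathcal{E}_{\RR}\otimes\mathcal{L}$ forces (via the identification with a complex line bundle and \cite[Theorem 7.1]{MPPT18}, or your Stokes computation) the triviality of $\mathcal{E}_{\RR}\otimes\mathcal{L}$, hence $\mathcal{E}_{\RR}\cong\mathcal{L}\oplus\mathcal{L}$, and a direct sum of two real line bundles with equal $w_1$ over $\TT^d$ is trivial (Proposition \ref{prop:rank2}: choose sections vanishing on parallel subtori, so their sum is nonvanishing). Equivalently, on a torus the Bockstein $H^1(\TT^d;\ZZ/2)\to H^2(\TT^d;\ZZ)$ vanishes, so tensoring an oriented plane bundle with a real line bundle does not change its Euler class; but this needs to be said, and it is precisely where the ``fragility'' bookkeeping of Wannier centers lives. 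The same remark applies, more mildly, to \eqref{part1}$\Rightarrow$\eqref{part2}: there you are free to take $\mathfrak{c}_a=0$, so exhibiting an analytic frame of the trivialized $\mathcal{E}_{\RR}$ suffices, and your analytic upgrading (equivalence of topological and real-analytic classifications, as in the paper's Remark \ref{rem:analytic}) together with the holomorphic extension of $P(k)$ matches the paper's Proposition \ref{prop:Isym}. Aside from the missing Wannier-center twist, the remaining steps, including the $H^1$-lifting/Stokes argument on a $2$-torus slice, are sound and essentially the ones the paper delegates to \cite{MPPT18}.
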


\begin{rem}\label{rem:main1}
In contrast to Theorem \ref{theo:main1}, when the Euler class does not vanish, there exists a Wannier basis $\{ \psi_a ( x - \gamma ): \gamma\in\Gamma \}_{a=1,2}$ compatible with the $I$-symmetry satisfying
    \begin{equation*}
        |\psi_1(x)|+|\psi_2(x)|\leq C\left\{\begin{array}{ll}
          |x|^{-2},  &  d=2,\\
           |x|^{-7/3},  & d=3.
        \end{array}\right. 
    \end{equation*}
This is because a rank $2$ oriented real vector bundle can be identified with a complex line bundle. Hence, we can apply the result of \cite[Theorem~1,2]{bty} directly to obtain the corresponding Wannier decay. 
\end{rem}

In contrast, if the rank of $\mathcal E_{\mathbb R}$ is different from 2, we show 

\begin{theo}[Fragile topology]
\label{theo:main2}
Let $\mathcal E_{\mathbb R}$ be a rank $r$ real subbundle of the Bloch bundle induced by the $I$-symmetry with base dimension $\le 3$. When $r \neq 2$, the bundle $\mathcal E_{\RR}$ always admits an exponentially localized Wannier basis compatible with the $I$-symmetry.
\end{theo}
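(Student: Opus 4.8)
The plan is to split the argument into a Bloch--Floquet reduction and a purely topological statement about real bundles over tori.

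\emph{Step 1: reduction.} I would first show that, under Assumptions \ref{ass:proj}--\ref{ass:I}, an exponentially localized Wannier basis compatible with the $I$-symmetry exists if and only if the real-analytic bundle $\mathcal{E}_{\RR}$ splits as an orthogonal direct sum of real line bundles $\mathcal{E}_{\RR}\cong\bigoplus_{a=1}^r L_a$. For the ``only if'' direction: the condition $I\mathcal{W}=\mathcal{W}$ forces $I$ to map each generator $\psi_a$ to another generator up to a lattice translation, $I\psi_a=\psi_{b(a)}(\,\cdot-\gamma_a)$, for an involution $b$ of $\{1,\dots,r\}$ and $\gamma_a\in\Gamma$ with $\gamma_{b(a)}=\gamma_a$; on the Bloch side this says that in the global frame $\{\widehat\psi_a(k)\}$ of $\mathcal{E}_{\CC}$ the antilinear involution $I$ is block-diagonal with $1\times1$ blocks $e^{i\langle\gamma_a,k\rangle}$ and $2\times2$ ``swap-with-phase'' blocks, and a Takagi-type factorization (the $2\times2$ blocks being, after a well-defined change of frame, the swap, hence trivial) exhibits $\mathcal{E}_{\RR}$ as the claimed sum of real line subbundles. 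For the ``if'' direction: every real line bundle over the torus $\TT^d$, $d\le3$, is the $+1$-eigenbundle of the antilinear involution $v\mapsto e^{i\langle\gamma,k\rangle}\bar v$ on a trivial complex line bundle (because $w_1$ always lifts along $H^1(\TT^d;\ZZ)\cong\Gamma\to H^1(\TT^d;\ZZ/2)$), so from $\mathcal{E}_{\RR}\cong\bigoplus_a L_a$ one builds a real-analytic unitary frame of $\mathcal{E}_{\CC}$ intertwining $I$ with the corresponding diagonal phases; holomorphic extendability of $k\mapsto P(k)$ and a Grauert-tube argument promote this to a holomorphic frame on a strip $|\Im k|<\delta$, and Paley--Wiener then yields $|\psi_a(x)|\lesssim e^{-\delta|x|}$. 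Much of this machinery should already be in place from the proof of Theorem \ref{theo:main1}.

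\emph{Step 2: topology.} It remains to prove that every real-analytic rank-$r$ vector bundle over $\TT^d$, $d\le3$, with $r\neq2$ is isomorphic to a direct sum of line bundles. By Grauert's theorem real-analytic and topological isomorphism classes coincide, so one works topologically. The case $r=1$ is immediate. For $r\ge3$ one uses that $\pi_2(O(r))=0$ (vanishing of $\pi_2$ of a Lie group), so the third obstruction vanishes and rank-$r$ bundles over a CW complex of dimension $\le3$ are classified by the pair $(w_1,w_2)\in H^1(\,\cdot\,;\ZZ/2)\times H^2(\,\cdot\,;\ZZ/2)$; this is precisely where $r=2$ is exceptional, since for $\mathrm{SO}(2)=U(1)$ the degree-$2$ invariant is the integral Euler class rather than its mod-$2$ reduction. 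One then realizes every pair $(w_1,w_2)$ by an explicit sum $\bigoplus_a L_a$ of line bundles: here $w_1=\sum_a w_1(L_a)$ and $w_2=\sum_{a<a'}w_1(L_a)w_1(L_{a'})$, and since $H^*(\TT^d;\ZZ/2)$ is the exterior algebra on $H^1$ (so $w_1(L_a)^2=0$, and for $d\le3$ every degree-$2$ class is a sum of products of degree-one classes, indeed a single such product), a short $\mathbb F_2$-linear-algebra computation --- already feasible with $r=3$ summands --- produces the desired $w_1(L_a)$. Combining Steps 1 and 2 proves the theorem, and in passing explains the fragility in Theorem \ref{theo:main1}: in rank $2$ the integral Euler class can obstruct an atomic (line-bundle-sum) model, but once the rank exceeds $2$ this integral invariant collapses to $w_2$, which is always realizable by such a model.

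Finally, I would record the elementary input that $\mathcal{E}_{\CC}$ itself is trivial here: $I$ antilinear gives $\mathcal{E}_{\CC}\cong\overline{\mathcal{E}_{\CC}}$, hence $c_j(\mathcal{E}_{\CC})=(-1)^j c_j(\mathcal{E}_{\CC})$, and torsion-freeness of $H^{2j}(\TT^d;\ZZ)$ together with $H^{\ge4}=0$ for $d\le3$ forces all Chern classes to vanish, so $\mathcal{E}_{\CC}$ is analytically trivial --- this is what makes the frame adjustments in Step 1 possible. The step I expect to be the main obstacle is the precise form of Step 1: checking that $I\mathcal{W}=\mathcal{W}$ really yields the permutation-with-translation normal form for $I$ (and not some more general mixing), keeping track of the interplay between the antilinear $I$ and the momentum $k$ in the conventions of Assumption \ref{ass:I}, and carrying out the frame adjustment real-analytically with enough control on the holomorphic strip that the resulting Wannier functions are genuinely exponentially localized. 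The topological Step 2 is then comparatively short.
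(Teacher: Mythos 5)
Your proposal is correct and follows essentially the same route as the paper: your Step 1 is the paper's Proposition \ref{prop:Isym} (existence of a compatible exponentially localized Wannier basis is equivalent to an analytic splitting of $\mathcal{E}_{\RR}$ into line bundles, via the permutation-with-translation normal form of Proposition \ref{def:WB-c} and the line bundles $\mathcal{L}_{\mathfrak{c}}$), and your Step 2 is the paper's Proposition \ref{prop:rankg3}, which classifies rank $\ge 3$ real bundles over $\TT^2$ and $\TT^3$ by $(w_1,w_2)$ and realizes each class explicitly as a sum of three line bundles via the Whitney product formula. The only cosmetic differences are your added observation that $c_1(\mathcal{E}_{\CC})=0$ and your direct $\ZZ/2$-linear-algebra realization of $(w_1,w_2)$, where the paper first reduces to the orientable case by tensoring with a line bundle.
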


The proofs of the theorems are presented in Section \ref{sec:Frag_topo}. Thus, an obstruction of constructing exponentially localized Wannier functions compatible with the $I$-symmetry may exist only for vector bundles of rank $2$. As a direct consequence, consider the rank $2$ real subbundle $\mathcal E_{\mathbb R}$ without exponentially decaying Wannier basis; by adding a real line bundle $L$, the new bundle $\mathcal E_{\mathbb R} \oplus L$ automatically admits an exponentially localized Wannier basis compatible with the $I$-symmetry. This is the phenomenon of fragile topology in condensed matter physics -- see \cite{apy19,PWV18,bbs1,bbs2}.


\smallsection{Related works} 
Different invariance conditions \eqref{eq:conditions} for the Bloch and Wannier functions have been considered in \cite{FMP16a,FMP16b}. The observed effects are fairly different, and fragile topology does not appear in these settings. To compare our framework with the setting studied by \cite{FMP16a,FMP16b}, we can compare the $C_{2z}T$-symmetry $I_{C_{2z}T}u(x)=\overline{u(-x)},$ with the time-reversal symmetry $I_Tu(x)=\overline{u(x)}$ and notice that they act differently on Bloch functions. Indeed, while they are both bosonic in the terminology of the aforementioned works as they square to the identity, we have $[I_{C_{2z}T},(-i\nabla-k)^2] =0$ whereas $I_{T}(-i\nabla-k)^2 =(-i\nabla+k)^2 I_T$ for the time reversal symmetry. That is, time-reversal symmetry maps $k$ to $-k$, while we consider symmetries that leave $k$ unchanged; see Proposition  \ref{def:WB-c}. See also \cite{klw} for an account of the relation between the $\ZZ_2$ topological invariants (cf.~\cite{km,fk,fkm}) and the Stiefel--Whitney class in the setting of \cite{FMP16b}, and \cite{ahn} for classification of topological phases in bands with the $I$-symmetry using Stiefel--Whitney class from a physics perspective. The fragile topology phenomenon in various models of twisted bilayer graphene has been widely studied recently in condensed matter physics. See \cite{song19,seb,LZSV19,lu21,PSBH19} and the references therein. Our result also generalizes \cite[Proposition~5.3]{bmcone} on Wannier basis in twisted bilayer graphene. 

For mathematical results on periodic Hamiltonians, we refer the readers to \cite{ku3,ku2} and the references therein. The classical question on the existence of exponentially localized Wannier functions does not involve any symmetries and can be fully understood in terms of the Chern number \cite{ne,P07,br,MPPT18}. In particular, for periodic systems with non-trivial Chern number, even though it is not possible to construct analytic Bloch frames that lead to exponentially localized Wannier basis, it has been shown that one can construct so-called Parseval frames with redundancies (see \cite{ku,ak,cmm}) by embedding them into trivial Bloch bundles with higher ranks. This construction yields a set of exponentially localized Wannier functions that are \emph{linearly dependent}. Orthogonal to this direction of constructing linearly dependent Wannier functions using Parseval frames, the work \cite{bty} studies the optimal decay rate of Wannier basis from the global Bloch frame with singularities for Bloch bundles with non-trivial Chern classes. Recently, results on Wannier basis localization have been extended to non-periodic systems \cite{LSW,MMP,LS24,RP24}.

\smallsection{Notations and conventions} Let $\mathcal H$ be a separable Hilbert space. Let $\Gamma = \sum_{i=1}^d \ZZ e_i \subset \mathbb R^d,$ for linearly independent $e_i,$ be a lattice, and $\Gamma^*:=\{ x\in \mathbb R^d : \langle x, \gamma\rangle \in 2\pi \mathbb Z \text{ for all } \gamma \in \Gamma\}=\sum_{i=1}^d \ZZ v_i$ be the dual lattice. We denote by $U(\mathcal H)$ the group of unitary operators on $\mathcal H$ and by $L(\mathcal H)$ the space of bounded linear operators on $\mathcal H$. 

The lattices $\Gamma$ and $\Gamma^*$ naturally give rise to a torus as well as a dual torus, that is, $\mathbb R^d/\Gamma$ and $\mathbb R^d/\Gamma^*$. However, for simplicity, we will often just say that we have a vector bundle over a torus $\mathbb T^d$, which then in the physics setting corresponds to the dual torus $\mathbb R^d /\Gamma^*$.

\smallsection{Structure of the paper}
In Section \ref{sec:Chern}, we review the (properties of) characteristic classes for real and complex vector bundles that are relevant to fragile topology. In Section \ref{sec:Wannier}, we introduce Wannier functions and discuss their properties under the compatibility condition \eqref{eq:compatible} with the $I$-symmetry. In Section \ref{sec:Frag_topo}, we discuss the splitting of real vector bundles and its relation with the existence of exponentially decaying Wannier functions. Using these, we prove Theorem \ref{theo:main1} and \ref{theo:main2}. In Section \ref{sec:TBG}, we study the fragile topology phenomenon in the twisted bilayer graphene using the framework we developed. 

\smallsection{Acknowledgment} We would like to thank Junyeong Ahn who very kindly explained to us his work \cite{apy19}, and Maciej Zworski for many helpful discussions and suggestions on the manuscript, especially for pointing out Proposition \ref{def:WB-c} to the authors. We would like to thank Qiuyu Ren for explaining to us the classification of vector bundles over the torus. We also thank Patrick Ledwith, Lin Lin, Solomon Quinn, Kevin Stubbs, Oskar Vafek, and Alexander Watson for discussions on fragile topology. ZT and MY were partially supported by the Simons Targeted Grant Award No.~896630. MY acknowledges the support of AMS-Simons travel grant. SB acknowledges support by SNF Grant PZ00P2 216019.

\section{Characteristic classes and bundle classification}
\label{sec:Chern}

In this section, we discuss the classification of complex vector bundles $\mathcal{E}_{\CC}$ and real vector bundles $\mathcal{E}_{\RR}$ over a manifold of dimension $d\leq 3$. We also recall the basic properties of Chern classes, Euler classes, and Stiefel--Whitney classes. The classification of real vector bundles is well-established in topology. We provide a self-contained discussion for the sake of completeness. For the classification of topological insulators from a different perspective, see \cite{ng}.

\subsection{Bundles}\label{sec:bundle}
We briefly recall the definition of vector bundles, the notion of orientation, and the classification of line bundles.
\begin{defi}[Vector bundle]
\label{def:vb}
    Let $E, X$ be topological spaces. $\pi: E\to X$ is called a (complex or real) \emph{vector bundle} of rank $r$ if for any $x\in X$, $\pi^{-1}(x)$ is a (complex or real) vector space of dimension $r$, and there exists a covering $\{U_i\}$ of $X$ such that there is a homeomorphism, called the trivialization, which is linear on each fiber $\pi^{-1}(x)$, such that the following diagram commutes 
    \[
    \begin{tikzcd}
    \pi^{-1}(U_i) \arrow[r, "\cong"] \arrow[d, "\pi"] & U_i \times \mathbb{C}^r \text{ or } U_i \times \mathbb{R}^r \arrow[ld, "pr_1"] \\
    U_i &
    \end{tikzcd}
    \]
    Here $E$ is also called the \emph{total space} and $X$ is called the \emph{base}. A vector bundle of rank $1$ is called a line bundle.
\end{defi}
\begin{rem}\label{rem:analytic}
     In the above definition, if $E,X$ are $C^s$ manifolds, $\pi$ is $C^s$ and the trivialization is a $C^s$ map, then the vector bundle is called a $C^s$ vector bundle, where $s\in \NN$ or $s=\infty$ (smooth) or $s=\omega$ (real analytic). We note that the classifications of $C^s$ vector bundles are equivalent for $s\in\NN$ or $s=\infty$ or $s=\omega$, see \cite[Theorem 5]{S64}.

\end{rem}
Replacement of the vector space with more general objects leads to the definition of fiber bundles.
\begin{defi}[Fiber bundle]
A \textit{fiber bundle} is a quadruple \( (E, B, \pi, F) \), where \( E \) is the \textit{total space}, \( B \) is the \textit{base space}, \( \pi: E \to B \) is a continuous \textit{projection map}, and \( F \) is the \textit{typical fiber}, a topological space, satisfying the following conditions:
\begin{enumerate}
    \item For every \( b \in B \), the preimage \( \pi^{-1}(\{b\}) \) is homeomorphic to \( F \),
    \item There exists an open cover \( \{U_i\}_{i \in I} \) of \( B \) and homeomorphisms \( \phi_i: \pi^{-1}(U_i) \to U_i \times F \) such that \( \pi(e) = \operatorname{pr}_1(\phi_i(e)) \), where \( \operatorname{pr}_1: U_i \times F \to U_i \) is the projection onto the first component.
\end{enumerate}
\end{defi}
From the definition of a fiber bundle, we can now define the principal $G$-bundles:
\begin{defi}[G-bundle]
A principal \( G \)-bundle, where $G$ is a topological group, is a fiber bundle such that $G$ acts continuously from the right on $E$. In addition, the action of the group must leave the fibers $\pi^{-1}(x)$ invariant and act freely and transitively on them.
\end{defi}
The extension of this concept to the smooth or real analytic category is straightforward by requiring that $G$ has a smooth/real analytic structure and the action is smooth/real analytic.

The reason for introducing the principal $G$-bundles is that they allow the classification of complex or real vector bundles with symmetries. Complex or real vector bundles naturally correspond to principal $G$-bundles, where $G$ is the general linear group (real or complex). One can extend the correspondence to include other groups $G$, by adding more structure to the vector bundle, such as, for instance, orientation:

\begin{defi}[Orientation]
    An \emph{orientation} of a real vector space is an equivalence class of ordered bases, where two ordered bases are equivalent if the invertible matrix taking the first basis to the second has a positive determinant. A vector bundle is called \emph{orientable} if there exists a continuous choice of orientation on each fiber.
\end{defi}

We now briefly recall the classification of real or complex line bundles, which shall be used in later sections.
\begin{enumerate}
    \item For a structure group $G$, $G$-principal bundles are classified by the classifying space $BG$, which is a principal $G$-bundle whose total space $EG$ is contractible:
    \begin{equation*}
        \{\text{principal $G$-bundles over } M\}/\text{isomorphism} \cong [M, BG]
    \end{equation*}
    where $[M,BG]$ is the set of homotopy classes of continuous maps from $M$ to $BG$.
    \item In particular, complex vector bundles of rank $r$ are classified by $BU(r)$. Real vector bundles of rank $r$ are classified by $BO(r)$ and oriented real vector bundles of rank $r$ are classified by $BSO(r)$.
    \item We have $BU(1)=\mathbb{CP}^{\infty}=K(\ZZ,2)$, $BO(1)=\mathbb{RP}^{\infty}=K(\ZZ/2,1)$, where $K(A,n)$ is the Eilenberg--Maclane space which satisfies
    \begin{equation*}
        \pi_n(K(A,n))=A, \quad \pi_m(K(A,n))=1, \quad m\neq n.
    \end{equation*}
    It has the universal property 
    \begin{equation*}
        [M,K(A,n)] \cong H^n(M;A).
    \end{equation*}
    \item 
    The element of $[M,BU(1)]\cong H^2(M;\ZZ)$ is the first Chern class $c_1$. The element of $[M, BO(1)]\cong H^1(M;\ZZ/2)$ is the first Stiefel--Whitney class $w_1$.
\end{enumerate}

\subsection{Classification of complex vector bundles}
For a rank $r$ complex vector bundle $\mathcal{E}_{\CC}$ over a manifold $M$, the Chern class $c_k(\mathcal{E}_{\CC
})$ is an element of $H^{2k}(M;\ZZ)$.
If the vector bundle has a smooth connection with curvature $\Omega$ which is a $\mathfrak{gl}(r,\CC)$-valued $2$-form, the Chern classes can be computed by
\begin{equation*}
    \det\left(I+\frac{\sqrt{-1}}{2\pi}t[\Omega]\right) =\sum_{j\geq 0} c_j(\mathcal{E}_{\CC}) t^j.
\end{equation*}

Over any manifold $M$, complex line bundles are classified up to isomorphisms by the first Chern class $c_1 \in H^2(M;\ZZ)$ as discussed at the end of Subsection \ref{sec:bundle} (see also \cite[p.~34]{C79} for a different proof). For higher-rank complex vector bundles, we have the following classification:
\begin{prop}
\label{prop:chern}
    Complex vector bundles $\mathcal{E}_{\CC}$ over a manifold $M$ of dimension $\leq 3$ are classified by the first Chern class $c_1 \in H^2(M;\ZZ)$. Furthermore, $\mathcal{E}_{\CC}$ admits a decomposition into line bundles 
    \begin{equation}\label{eq:complex-bundle-decomp}
        \mathcal{E}_{\CC} = \bigg(\bigoplus_{i=1}^{r-1} L_i\bigg) \oplus L,
    \end{equation}
    where all line bundles $L_i$ are trivial apart from possibly $L$ for which $c_1(\mathcal{E}_{\CC})=c_1(L).$
\end{prop}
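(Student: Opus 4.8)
The plan is to reduce the proposition to the classification of complex line bundles recalled at the end of Subsection~\ref{sec:bundle}, by peeling off trivial line sub-bundles, where the splitting is forced by the low dimension of the base.

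First I would recall the standard obstruction-theoretic fact that a rank $r$ complex vector bundle $\mathcal{E}_{\CC}$ over a CW complex $X$ with $\dim X\le 2r-1$ admits a nowhere-vanishing continuous section. Indeed, such a section is precisely a section of the associated sphere bundle, whose fiber $S^{2r-1}$ is $(2r-2)$-connected; building a section skeleton by skeleton, the obstruction to extending over the $(k+1)$-cells lies in $H^{k+1}(X;\pi_k(S^{2r-1}))$, and this group vanishes both when $k+1\le 2r-1$ (since $\pi_k(S^{2r-1})=0$ for $k<2r-1$) and when $k+1>\dim X$ (no such cells), which together cover all $k$ as soon as $\dim X\le 2r-1$. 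A smooth manifold $M$ with $\dim M\le 3$ carries a CW structure of dimension $\le 3$ (e.g.\ from a smooth triangulation), so for every $r\ge 2$ we have $2r-1\ge 3\ge\dim M$ and $\mathcal{E}_{\CC}$ has a nowhere-vanishing section.

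Next I would fix a Hermitian metric on $\mathcal{E}_{\CC}$: a nowhere-vanishing section spans a trivial line sub-bundle, and its orthogonal complement is a rank $r-1$ sub-bundle, so $\mathcal{E}_{\CC}\cong\underline{\CC}\oplus\mathcal{E}'$ with $\rank\mathcal{E}'=r-1$. Iterating this as long as the rank remains $\ge 2$ produces the decomposition $\mathcal{E}_{\CC}\cong\underline{\CC}^{\,r-1}\oplus L$ of \eqref{eq:complex-bundle-decomp} with $L$ a complex line bundle, and the Whitney sum formula together with the vanishing of the Chern classes of trivial bundles gives $c_1(\mathcal{E}_{\CC})=c_1(L)$. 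Finally, I would invoke the classification of complex line bundles by $c_1\in H^2(M;\ZZ)$: the assignment $\mathcal{E}_{\CC}\mapsto c_1(\mathcal{E}_{\CC})$ on isomorphism classes of rank $r$ bundles is onto $H^2(M;\ZZ)$ (realize a class $\alpha$ by $\underline{\CC}^{\,r-1}\oplus L$ with $c_1(L)=\alpha$) and one-to-one (if $c_1(\mathcal{E}_{\CC})=c_1(\td{\mathcal{E}}_{\CC})$, then the line bundle factors $L,\td L$ have equal $c_1$, hence $L\cong\td L$ and $\mathcal{E}_{\CC}\cong\underline{\CC}^{\,r-1}\oplus L\cong\underline{\CC}^{\,r-1}\oplus\td L\cong\td{\mathcal{E}}_{\CC}$); for $r=1$ there is nothing to prove beyond the line bundle classification.

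The only step that needs genuine care is the obstruction-theoretic input: one has to verify that the connectivity $2r-2$ of the sphere fiber outpaces $\dim M\le 3$ for every $r\ge 2$, so that all obstruction groups vanish and the nowhere-vanishing section exists; once the bundle splits off a trivial summand, the rest is a formal consequence of the line bundle classification and the Whitney sum formula.
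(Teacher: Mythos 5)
Your proof is correct and follows the same overall strategy as the paper's --- split off trivial line sub-bundles until a single line bundle $L$ remains, then invoke the classification of line bundles by $c_1$ --- but you obtain the key nowhere-vanishing section by a different route. The paper uses the Thom transversality theorem: a generic smooth section meets the zero section transversally, and since $\dim_{\RR}(ds(T_xM)) + \dim_{\RR}(T_{(x,0)}0_M) - \dim_{\RR}(T_{(x,0)}\mathcal{E}_{\CC}) = \dim M - 2r < 0$ for $r \ge 2$ and $\dim M \le 3$, the intersection must be empty. You instead run obstruction theory on the associated sphere bundle, using that $S^{2r-1}$ is $(2r-2)$-connected, so that every obstruction group $H^{k+1}(X;\pi_k(S^{2r-1}))$ vanishes once $\dim X \le 2r-1$; this is a sound argument (for $r\ge 2$ the fiber is at least $2$-connected, so no issues with local coefficients arise). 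The two versions buy slightly different things: the transversality argument is shorter and stays in the smooth category the paper works in, while your obstruction-theoretic one applies to arbitrary CW complexes of dimension $\le 3$ and needs no smooth structure. You also spell out explicitly why the decomposition \eqref{eq:complex-bundle-decomp} gives both surjectivity and injectivity of the classifying map $\mathcal{E}_{\CC}\mapsto c_1(\mathcal{E}_{\CC})$, a step the paper leaves implicit. Both proofs are complete.
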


\begin{proof}
Let $\mathcal{E}_{\CC}$ be a complex vector bundle of rank $r \geq 2$ over $M$. By the Thom transversality theorem, there exists a non-vanishing section $s$ of $\mathcal{E}_{\CC}$. Since if $s$ intersects the zero section $0_M$ transversally at $(x,0)\in\mathcal{E}_{\CC}$, we have 
\[ ds (T_xM) + T_{(x,0)}0_M = T_{(x,0)}\mathcal{E}_{\CC}.\]
Thus, the dimension of the intersection of the tangent space
\[
\dim_{\RR}(ds (T_xM) \cap T_{(x,0)}0_M) = \dim_{\RR}(ds (T_xM)) + \dim_{\RR}(T_{(x,0)}0_M) - \dim_{\RR}( T_{(x,0)}\mathcal{E}_{\CC}) < 0,
\]
ensures that the intersection is empty. Hence, $s$ induces a trivial line bundle $L_1$, and $\mathcal{E}_{\CC}$ splits as a Whitney sum:
\[
\mathcal{E}_{\CC} \cong L_1 \oplus \mathcal{E}_{\CC}',
\]
where $\mathcal{E}_{\CC}'$ is a complex vector bundle of rank $r-1$. By induction, $\mathcal{E}_{\CC}$ is isomorphic to the Whitney sum of $r-1$ trivial line bundles with a complex line bundle $L$. The Chern number of $L$ satisfies $c_1(L) = c_1(\mathcal{E}_{\CC})$, which completes the proof.
\end{proof} 
\begin{rem}
    When $\mathcal{E}_{\CC}$ is real analytic, the decomposition in the previous Proposition can be chosen so that $L_i$ and $L$ are real analytic line bundles. This is because \eqref{eq:complex-bundle-decomp} holds in the topological category, and Remark~\ref{rem:analytic} implies a real analytic isomorphism
    \begin{equation*}
         \mathcal{E}_{\CC} \cong \bigg(\bigoplus_{i=1}^{r-1} L_i\bigg) \oplus L.
    \end{equation*}
\end{rem}

\subsection{Stiefel--Whitney class}
For a real vector bundle $\mathcal{E}_{\RR}$ of finite rank on a base space $M$, the $i$-th Stiefel--Whitney class $w_i(\mathcal{E}_{\RR})$ is an element of $H^i(M;\ZZ/2)$. An axiomatic definition of the Stiefel--Whitney class can, for instance, be found in \cite[Theorem~5.4]{D94}.
Some basic properties of the Stiefel--Whitney classes that will be relevant to us are:
\begin{enumerate}
    \item $w_i(\mathcal{E}_{\RR})=0$ if $i >\operatorname{rank}(\mathcal{E}_{\RR})$ or $i>\dim M$.
    \item $\mathcal{E}_{\RR}$ is orientable if and only if $w_1(\mathcal{E}_{\RR})=0$ \cite[Theorem~2.1]{D94}. 
    \item Let
\begin{equation*}
    w_t(\mathcal{E}_{\RR})= 1+w_1(\mathcal{E}_{\RR})t+\cdots+w_r(\mathcal{E}_{\RR})t^r,\quad w_i(\mathcal E) \in H^i(M;\ZZ/2).
\end{equation*} 
Then the Stiefel--Whitney class of the Whitney sum is given by 
\begin{equation*}
    w_t(\mathcal{E}_{\RR}\oplus \mathcal{F}_{\RR}) =w_t(\mathcal{E}_{\RR})\smile w_t(\mathcal{F}_{\RR}),
\end{equation*}
where the cup product induces a bilinear operation on cohomology via the binomial formula:
\[H^i(M;\ZZ/2)\times H^{k-i}(M;\ZZ/2) \to H^k(M;\ZZ/2).\]
Consequently, we have $w_k(\mathcal E_{\RR}\oplus \mathcal F_{\RR})=\sum_{i=0}^k w_i(\mathcal E_{\RR})\smile w_{k-i}(\mathcal F_{\RR})$ and in particular
\begin{gather}
   \label{eq:SW-sum}
    w_1(\mathcal E_{\RR} \oplus \mathcal F_{\RR}) = w_1(\mathcal E_{\RR}) + w_1(\mathcal F_{\RR}), \\
    w_2(\mathcal E_{\RR} \oplus \mathcal F_{\RR}) = w_2(\mathcal E_{\RR}) + w_1(\mathcal E_{\RR}) \smile w_1(\mathcal F_{\RR}) + w_2(\mathcal F_{\RR}). 
\end{gather}
\item When tensoring two real line bundles $L_1,L_2$, the first Stiefel--Whitney class is given by $w_1(L_1\otimes L_2)= w_1(L_1) + w_1(L_2)$. In general, we can assume that both bundles are direct sums of line bundles and state a general formula. This formula still holds in general by the ``splitting principle" (see~\cite[Section~21]{BT}).

For example, if $\mathcal{E}_{\RR}$ is a real vector bundle of rank $r$ and $L$ is a real line bundle, then
\begin{equation}
\label{eq:SW-tensor}
    w_1(\mathcal{E}_{\RR}\otimes L) = w_1(\mathcal{E}_{\RR}) +r w_1(L).
\end{equation}
In particular, the tensor bundle $L\otimes L$ is always a trivial line bundle.
\end{enumerate}

\subsection{Euler class}
Let $\mathcal{E}_{\RR}$ be a real oriented vector bundle of rank $2k$ over a smooth manifold $M$, the Euler class $e(\mathcal{E}_{\RR})$ is an element of $H^{2k}(M;\ZZ)$.

If $\mathcal{E}_{\RR}$ has an orthogonal connection (with respect to a bundle metric) with curvature $\Omega$ (an $\mathfrak{so}(2k)$-valued $2$ form), the Euler class is given by
\[
e(\mathcal{E}_{\RR}) = \frac{1}{(2\pi)^k} [\operatorname{Pf}(\Omega)] \in H^{2k}_{\text{dR}}(M;\mathbb{R}),
\]
where $\operatorname{Pf}(\Omega)$ is the Pfaffian of $\Omega$. \begin{ex}
In particular, for $M = \mathbb R^2 /\Gamma^*$, $k=1$, $d=2$ and a connection $\Omega = \begin{pmatrix} 0 & \Omega_{12} \\ -\Omega_{12} & 0 \end{pmatrix},$ the Euler number is
\[ \chi(\mathcal E_{\mathbb R}) := \int_{\mathbb R^2/\Gamma^*}  e(\mathcal E_{\mathbb R})=\frac{1}{2\pi}\int_{\mathbb R^2/\Gamma^*} \Omega_{12}.\]
\end{ex}
There is a natural homomorphism $\sigma :H^{2k}(M;\ZZ)\to H^{2k}(M;\ZZ/2)$ that maps the Euler class to the top Stiefel--Whitney class \cite[Proposition~9.5]{MS74} 
\begin{equation}
    \label{eq:mod2}
    w_{2k}(\mathcal E_{\RR})= \sigma (e(\mathcal E_{\RR})).
\end{equation}
The Euler class is also related to the Chern class. In fact, by \cite[Lemma 14.1]{MS74} for a complex vector bundle $\mathcal E_{\CC}$ with rank $k$, its underlying real vector bundle $\mathcal E_{\RR}$ of rank $2k$ has a canonical orientation. The top Chern class $c_{k}(\mathcal E_{\CC})\in H^{2k}(M;\ZZ)$ then coincides with the Euler class $e(\mathcal E_{\RR})$ (cf.~\cite[Section 14.2]{MS74}).

\subsection{Classification of real vector bundles}
Over any manifold $M$, real line bundles are classified by the first Stiefel--Whitney class $w_1\in H^1(M;\ZZ/2)$ as discussed at the end of Subsection \ref{sec:bundle}. 

Suppose $M$ has dimension $\leq 3$, we can classify real vector bundles over $M$. 
\begin{prop}
\label{prop:real_class}
    Real vector bundles of rank $r$ over a manifold $M$ with dimension $\leq 3$ are classified by the Stiefel--Whitney classes $w_1\in H^1(M;\ZZ/2)$ and $w_2\in H^2(M;\ZZ/2)$ when $r\neq 2$. When $r=2$, oriented real vector bundles are classified by the Euler class $e_2\in H^2(M;\ZZ)$.
\end{prop}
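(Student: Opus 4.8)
The plan is to prove both assertions through the standard dictionary between real rank‑$r$ bundles over $M$ and homotopy classes of maps $M\to BO(r)$ (and oriented bundles and maps to $BSO(r)$), combined with a computation of the low‑degree Postnikov tower of these classifying spaces. Throughout I would use that a manifold of dimension $\le 3$ has the homotopy type of a CW complex of dimension $\le 3$, so that obstruction theory applies and $[M,-]$ is homotopy‑invariant; this is where the dimension hypothesis enters.

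\emph{The rank $2$ case} is immediate: since $SO(2)\cong U(1)$ as topological groups, $BSO(2)\cong BU(1)=\mathbb{CP}^{\infty}=K(\ZZ,2)$, so $[M,BSO(2)]\cong H^2(M;\ZZ)$ for every $M$, and chasing this identification shows an oriented rank $2$ bundle is sent to its Euler class (which, under $SO(2)=U(1)$, is the first Chern class of the associated complex line bundle, as recalled in Section~\ref{sec:Chern}). No dimension hypothesis is needed here. I would also remark that this case is genuinely exceptional because $\pi_2(BO(2))=\pi_1(SO(2))=\ZZ$, whereas for $r\ge 3$ the stabilization map $\ZZ=\pi_1(SO(2))\to\pi_1(SO(r))=\ZZ/2$ collapses the integer information — and this collapse is precisely the mechanism underlying fragile topology.

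\emph{The case $r\neq 2$} reduces to $r\ge 3$ (for $r\le 1$ the statement is trivial, with $w_2\equiv 0$). First I would record the low homotopy groups of $BO(r)$: $\pi_1(BO(r))=\pi_0(O(r))=\ZZ/2$, $\pi_2(BO(r))=\pi_1(O(r))=\ZZ/2$ (as $\pi_1(SO(r))=\ZZ/2$ for $r\ge 3$), and $\pi_3(BO(r))=\pi_2(O(r))=0$ (vanishing of $\pi_2$ of a compact Lie group). Then I would consider the map $(w_1,w_2)\colon BO(r)\to K(\ZZ/2,1)\times K(\ZZ/2,2)$ built from the first two Stiefel--Whitney classes and show it is an isomorphism on $\pi_i$ for every $i\le 3$: on $\pi_1$, the generator of $\pi_1(BO(r))$ is represented by the rank‑$r$ bundle over $S^1$ with $w_1\neq 0$; on $\pi_2$, the generator corresponds, via $SO(2)\hookrightarrow SO(r)$, to the bundle over $S^2=\mathbb{CP}^1$ built from the tautological complex line bundle and a trivial summand, whose underlying real bundle has $w_2=c_1\bmod 2\neq 0$; and on $\pi_3$ both sides vanish. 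Since the relevant $\pi_1$‑actions on fibers are trivial ($\mathrm{Aut}(\ZZ/2)=1$), it follows that $(w_1,w_2)$ is a $4$‑equivalence, so by obstruction theory — all lifting obstructions lying in $H^{i+1}(M;\pi_iF)$ and all uniqueness obstructions in $H^i(M;\pi_iF)$, where $F$ is the $3$‑connected homotopy fiber, and these groups vanish either because $\pi_iF=0$ for $i\le 3$ or because the cohomological degree exceeds $\dim M$ — it induces a bijection $[M,BO(r)]\xrightarrow{\ \sim\ }[M,K(\ZZ/2,1)\times K(\ZZ/2,2)]=H^1(M;\ZZ/2)\times H^2(M;\ZZ/2)$ whenever $\dim M\le 3$. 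As this bijection is realized by $\xi\mapsto(w_1(\xi),w_2(\xi))$, the claim follows.

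\emph{The main obstacle.} The homotopy‑group bookkeeping is routine; the delicate point is verifying that $(w_1,w_2)$ is an isomorphism on $\pi_2$ — equivalently, that the first $k$‑invariant of $BO(r)$, an element of $H^3(K(\ZZ/2,1);\ZZ/2)$, vanishes for $r\ge 3$. This forces one to pin down the generator of $\pi_1(SO(r))$ concretely (the class of a full rotation inside an $SO(2)$‑subgroup) and to check that the resulting rank‑$r$ bundle over $S^2$ has nonzero $w_2$. A secondary technical point is to confirm that the coefficient systems appearing in the obstruction‑theoretic step are untwisted, which holds here precisely because $\mathrm{Aut}(\ZZ/2)$ is trivial.
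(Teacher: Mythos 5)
Your argument is correct, and it reaches the same destination by the same general philosophy as the paper (classifying spaces plus low-degree homotopy groups), but the execution differs in a way worth noting. The paper first reduces $r\ge 4$ to $r=3$ by splitting off trivial summands (the transversality argument of Proposition \ref{prop:chern}) and reduces the non-orientable case to the orientable one by tensoring with a real line bundle via \eqref{eq:SW-tensor}; after these reductions it only needs the simply connected space $BSO(3)$, whose $3$-type is visibly $K(\ZZ/2,2)$ because $\pi_1=\pi_3=0$ and $\pi_2=\ZZ/2$, so no $k$-invariant or local-coefficient issues arise. You instead work directly with $BO(r)$ for all $r\ge 3$ and prove that $(w_1,w_2)\colon BO(r)\to K(\ZZ/2,1)\times K(\ZZ/2,2)$ is a $4$-equivalence. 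This buys uniformity (no case split on $r$, no reduction to the orientable case, and hence no need to track how $w_2$ changes under tensoring with a line bundle when undoing the reduction — a point the paper's one-line closing sentence actually glosses over) at the cost of having to verify that the first $k$-invariant of $BO(r)$ vanishes and that the $\pi_1$-actions are trivial; your identification of the generator of $\pi_2(BO(r))$ via the tautological line bundle over $\mathbb{CP}^1$ stabilized into $SO(r)$, with $w_2=c_1\bmod 2\neq 0$, is exactly the right way to settle this. Your observation that the collapse $\ZZ=\pi_1(SO(2))\to\pi_1(SO(r))=\ZZ/2$ is the mechanism behind fragile topology is a useful conceptual addition that the paper's proof leaves implicit. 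The rank $2$ part is identical to the paper's.
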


\begin{proof}
Then rank $2$ oriented real bundles are classified by the Euler class
\begin{equation*}
    e_2\in[M, BSO(2)]=[M, K(\ZZ,2)]=H^2(M;\ZZ).
\end{equation*}
For rank $r$ real vector bundles with $r\geq 3$, we may assume $r=3$ because for $r\geq 4$ we can always split off a trivial subbundle as in the proof of Proposition~\ref{prop:chern}. Moreover, we may assume it is orientable by tensoring with a line bundle, cf.~\eqref{eq:SW-tensor}. Now oriented rank $3$ line bundles are classified by $[M,BSO(3)]$. Since $\pi_1 BSO(3)=\pi_0 SO(3)=1$, $\pi_2 BSO(3)=\pi_1 SO(3)=\ZZ/2$, $\pi_3 BSO(3)=\pi_2 SO(3)=1$, they are classified by the second Stiefel--Whitney class $w_2$:
\begin{equation*}
    [M, BSO(3)]=[M, K(\ZZ/2,2)]=H^2(M;\ZZ/2)\ni w_2.
\end{equation*}
Since the Stiefel--Whitney class does not depend on the orientation, this finishes the classification of rank $3$ vector bundles.
\end{proof}

\section{Wannier functions with symmetry constraints}
\label{sec:Wannier}
In this section, we recall the basics of Wannier functions and discuss the compatibility of Wannier functions with the $I$-symmetry.  

\subsection{Preliminaries on Wannier functions}
\label{sec:prelim}
We first recall the definition of Wannier functions. For the projections $P(k)$ with rank $r$ satisfying Assumption \ref{ass:proj} and the Bloch bundle $\mathcal{E}_{\CC}$, we can define a basis of $\operatorname{ran} P(k)$ for ${k\in \mathbb R^d/\Gamma^*}$.
\begin{defi}
\label{def:bb}
For the orthogonal projections $\mathcal P:=(P(k))_{k \in \mathbb R^d}$ satisfying Assumption \ref{ass:proj}, we say that $\Phi:\mathbb R^d \to \mathcal H^r$ is a \emph{global Bloch frame}, if 
   \[\Phi : \RR^d \to \mathcal{H} \oplus \dots \oplus \mathcal{H} = \mathcal{H}^r,\quad k \mapsto (\varphi_1(k), \dots, \varphi_r(k)),\]
   is \emph{$\tau$-equivariant}:
    \begin{equation}
        \label{eq:equivariance}
        \varphi_a(k + \gamma) = \tau(\gamma) \varphi_a(k) \quad \text{for all } k \in \mathbb{R}^d, \, \gamma \in \Gamma^*, \, a \in \{1, \dots, r\},
    \end{equation}
and for a.e.\ $k \in \RR^d$, the set $\{\varphi_1(k), \dots, \varphi_r(k)\}$ is an orthonormal basis spanning $\operatorname{Ran} P(k)$. 
\end{defi}

\begin{rem}
\label{rem:section}
    In the language of vector bundles, each $\varphi_a(k)$ is a normalized section of the Bloch bundle, and a Bloch frame $\Phi$ is a family of orthonormal sections of the Bloch bundle $\mathcal{E}_\CC$ that span the Bloch bundle over $\CC$ for a.e. $k\in \RR^d$. 
\end{rem}

We now define Wannier functions. First, recall the \textbf{Bloch transform} is given by 
 \begin{equation}
     \label{eq:BFT}
     (\mathcal B\psi)(k,x) :=\sum_{\gamma \in \Gamma} e^{-i\langle x+\gamma,k\rangle} \psi(x+\gamma), \quad \psi\in L^2(\mathbb R^d), 
 \end{equation}
which is an isometry 
 \[\mathcal B: L^2(\mathbb R^d) \longrightarrow \mathcal{H}_\tau := \{\varphi \in L^2_{\text{loc}}(\RR^d;L^2(\mathbb R^d/\Gamma )): \varphi(k+\gamma,x)=\tau(\gamma)\varphi(k,x) \text{ for all }\gamma \in \Gamma^*\}. \]
\begin{defi}
\label{def:Wannier}
Let $\varphi(k,x) \in \mathcal H_{\tau}$ be a normalized section of the Bloch bundle, the \emph{Wannier function} $w(\varphi) \in L^2(\mathbb R^d)$ is defined by
\begin{equation}
\label{eq:Wannier}
 w(\varphi)(x):=\frac{1}{\vert \mathbb R^d / \Gamma^* \vert} \int_{\mathbb R^d / \Gamma^*} e^{i\langle x, k\rangle}\varphi(k,x) \ dk=\mathcal B^{-1}(\varphi). 
\end{equation}
Moreover, Wannier functions of the Bloch frame $\Phi = (\varphi_a)_{1\leq a\leq r}$ are given by $\{w(\varphi_a)\}_{1\leq a\leq r}$.
\end{defi}

For a global Bloch frame $\Phi = (\varphi_a)_{1\leq a\leq r}$ of an orthogonal projection $\mathcal{P}$, the shifted Wannier functions
\begin{equation}
\label{eq:wb2}
\mathcal{W} = \{w(\varphi_a)(\bullet - \gamma)\}_{\gamma \in \Gamma, 1\leq a\leq r}   
\end{equation}
form an orthonormal basis of the space 
$$\bigoplus_{k\in\RR^d/\Gamma^*}\operatorname{ran} P(k).$$ 
Hence, the set $\mathcal{W}$ is also a \emph{Wannier basis} as in the Definition \ref{def:wb}. See \cite[Section~6.5]{ku2} or \cite[Proposition~5.5]{notes} for detailed discussions. 

\subsection{Chern numbers and localization dichotomy}
\label{sec:Chern_review}
We recall the criterion of exponential localization of Wannier functions without symmetry constraints. 

Recall that $P(k)$ denotes the family of orthogonal projections parametrized by $k \in \mathbb{R}^d$ that satisfy the equivariance conditions \eqref{eq:t-equivariance} of the reciprocal lattice $\Gamma^*$. This gives rise to a Bloch bundle $\mathcal{E}_\CC$ over the torus $\RR^d/\Gamma^*$ defined in \eqref{eq:bb}, equipped with a Berry connection.   
Let $\Omega(k) = \sum_{i,j} \Omega_{ij}(k) \ dk_i \wedge dk_j$ be the curvature form of the Berry connection:
\[ \Omega_{ij}(k):=\tr_{\mathcal H}(P(k)[\partial_i P(k),\partial_j P(k)]).\]
Then, the first Chern class is defined as
\[ c_1(\mathcal E_{\CC})=\tfrac{i}{2\pi} [\Omega] \in H^2_{\text{dR}}(\mathbb R^d/\Gamma^*;\RR),\]
where $[\Omega]$ represents the de Rham cohomology class of the curvature form $\Omega$.

To compute the first Chern class explicitly, we use the de Rham isomorphism. For the $d$-dimensional torus $\mathbb{R}^d / \Gamma^*$, the second homology group satisfies
\[
H_2(\mathbb R^d/\Gamma^*) \cong \mathbb Z^{\binom{d}{2}}.
\]
This group is generated by the independent $2$-cycles $(B_{ij})_{1\le i < j \le d},$ where $B_{ij}:=(\mathbb R v_i + \mathbb R v_j)/(\ZZ v_i + \ZZ v_j)$ and $(v_i)_{i=1}^d$ are basis vectors of $\Gamma^*$.

\begin{defi}
    A family of projections $\mathcal P$ that satisfy Assumption~\ref{ass:proj} is called \emph{Chern trivial} if for $d \in \{2,3\}$ the Chern numbers
    \[ C_1(\mathcal E_{\CC})_{ij}:=\frac{i}{2\pi} \int_{B_{ij}} \tr_{\mathcal H}(P(k)[\partial_i P(k),\partial_j P(k)]) \ dk_i \wedge dk_j\]
    vanish for all $1\le i<j\le d$.
\end{defi}

By the de Rham isomorphism, Chern triviality implies the vanishing of the first Chern class $c_1(\mathcal E_{\CC})=0.$ The first Chern class has direct implications on the structure of Wannier functions. This is the \emph{localization dichotomy} as stated, for instance, in \cite{MPPT18}:
\begin{itemize}
    \item \textbf{Case 1:} If $\mathcal P$ is Chern-trivial, then there exists a Bloch frame $\Phi$ that is real analytic in the parameter $k$, such that all Wannier functions $w(\varphi_a)$ decay exponentially.
    \item \textbf{Case 2:} If $\mathcal P$ is not Chern-trivial, then there does not exist a Bloch frame that is $H^1_{\loc}$ in $k$. In addition, for any Bloch frame $\Phi$, there exist Wannier functions $w(\varphi_a)$ for some $a$ such that $ x w(\varphi_a) \notin {L^2} $.
\end{itemize}

It is instructive to compare this dichotomy with Theorem~\ref{theo:main1} and \ref{theo:main2}. If $\mathcal E_{\mathbb C}$ is not Chern-trivial and $\mathcal E_{\mathbb C}'$ is Chern-trivial, then the direct sum $\mathcal E_{\mathbb C} \oplus \mathcal E_{\mathbb C}'$ remains not Chern-trivial. Hence, the obstruction to the existence of exponentially localized Wannier basis imposed by the Chern number cannot be removed by adding trivial bundles. This highlights the distinction between \emph{stable topology} in condensed matter physics, where obstructions cannot be lifted by modifications by topologically trivial elements, and \emph{fragile topology}, which depends on the rank and can be lifted by changing it.

\subsection{Compatibility of Wannier functions}
\label{sec:compatibility}
Now we consider the Bloch bundle $\mathcal{E}_{\CC}$ and Wannier basis $\mathcal{W}$ equipped with the symmetry $Iu(z)= \overline{u(-z)}$ defined in \eqref{eq:I}. 

By Assumption \ref{ass:I}, the $I$-symmetry acts as a real vector bundle homomorphism (i.e., each fiber is preserved). Since $I^2=1$, it has eigenvalue $1$ and $-1$ on each fiber. The $I$-symmetry naturally induces a rank $r$ real subbundle
$$ \mathcal{E}_{\RR}  =\{(k,v)\in \mathcal{E}_{\mathbb{C}}: I v=v\}\subset \mathcal{E}_{\CC},$$ 
given by the eigenspace of $1$. 

We now give an equivalent characterization of Wannier functions $\mathcal{W}$ compatible with the $I$-symmetry, i.e., $I\mathcal{W} = \mathcal{W}$. We shall see that this characterization relates the compatibility conditions to the Stiefel--Whitney classes of the real line bundles $\mathcal{E}_{\RR}$. 

\begin{prop}
\label{def:WB-c}
A Wannier basis $\mathcal{W}$ is compatible with the $I$-symmetry in the sense of Definition \ref{def:compatible} if and only if there exist functions $\{\psi_{a}(x)\in L^2(\RR^d): a\in\{1,\cdots, r\}\}$ and $\mathfrak{c}_1,\cdots, \mathfrak{c}_r\in \Gamma$ such that
\begin{equation}
\label{eq:WB-c}
    I\psi_{a,\gamma} = \psi_{a,\mathfrak{c}_{a}-\gamma},\quad a\in\{1, \cdots , r\}, \quad \psi_{a,\gamma}(x):=\psi_a(x-\gamma)
\end{equation}
and
\begin{equation}
    \operatorname{span}_{\CC} \mathcal{W} = \operatorname{span}_{\CC} \{\psi_{a,\gamma}(x): \gamma\in\Gamma, a\in\{1,\cdots, r\}\}.
\end{equation}
In particular, the points $\mathfrak{c}_1,\cdots, \mathfrak{c}_r\in \Gamma$ are called the \emph{rescaled Wannier centers}.
\end{prop}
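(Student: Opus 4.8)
The statement is elementary and does not involve the bundle structure: the only facts I will use are that $I$ is an antilinear isometry of $L^2(\RR^d)$ with $I^2=1$ which intertwines lattice translations, $IT_\gamma=T_{-\gamma}I$, where $T_\gamma$ denotes translation by $\gamma\in\Gamma$. Throughout I write $\psi_{a,\gamma}:=T_\gamma\psi_a$, so that $I\psi_{a,\gamma}=T_{-\gamma}(I\psi_a)$, and I will prove the two implications separately.

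For the ``if'' direction I would simply observe that if generators $\psi_a$ and points $\mathfrak{c}_a\in\Gamma$ satisfy $I\psi_{a,\gamma}=\psi_{a,\mathfrak{c}_a-\gamma}$, then $I$ maps the set $\{\psi_{a,\gamma}:a,\gamma\}$ onto itself, since $\gamma\mapsto\mathfrak{c}_a-\gamma$ is a bijection of $\Gamma$; as this set has the same $\CC$-span as $\mathcal W$, it is itself a Wannier basis of the spectral subspace $\operatorname{ran}\mathcal P$, now manifestly compatible with $I$. The ``only if'' direction carries the content. Starting from $I\mathcal W=\mathcal W$, fix generators so that $\mathcal W=\{\psi_{b,\gamma}\}$. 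Since $\psi_a=\psi_{a,0}\in\mathcal W$ and $\mathcal W$ is $I$-stable, there are an index $\sigma(a)$ and a vector $\delta_a\in\Gamma$ with $I\psi_a=\psi_{\sigma(a),\delta_a}$, hence $I\psi_{a,\gamma}=\psi_{\sigma(a),\delta_a-\gamma}$; applying $I$ once more and using $I^2=1$ forces $\sigma$ to be an involution of $\{1,\dots,r\}$ with $\delta$ constant on the orbits of $\sigma$. When $\sigma(a)=a$, the relation $I\psi_{a,\gamma}=\psi_{a,\delta_a-\gamma}$ already has the required shape, with $\mathfrak{c}_a:=\delta_a$. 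When $\{a,b\}$ is a transposition of $\sigma$, I would diagonalize the involution on that block: setting $u_\pm:=\tfrac1{\sqrt2}(\psi_a\pm\psi_{b,\delta_a})$ one computes, using $I\psi_a=\psi_{b,\delta_a}$ and $I\psi_{b,\delta_a}=\psi_a$, that $Iu_+=u_+$ and $Iu_-=-u_-$; then $v:=iu_-$ satisfies $Iv=v$, because antilinearity of $I$ turns $i$ into $-i$ and cancels the sign. The functions $u_+,v$ and all their $\Gamma$-translates are orthonormal (the cross terms vanish by orthonormality of $\mathcal W$, and $|i|=1$), satisfy $Iu_{+,\gamma}=u_{+,-\gamma}$ and $Iv_\gamma=v_{-\gamma}$, and span the same $\CC$-subspace as $\{\psi_{a,\gamma},\psi_{b,\gamma}\}_\gamma$. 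Carrying out this replacement on each transposition of $\sigma$ while keeping the generators attached to the fixed points yields functions $\psi_1',\dots,\psi_r'$ with all the required properties, with $\mathfrak{c}_a=\delta_a$ on fixed points and $\mathfrak{c}_a=0$ on transpositions.

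The main — and essentially only — obstacle is the transposition case: one has to recognize that $I$ induces an involution on the generators modulo translation, and that in a transposition block the naive $\pm1$-eigenvectors produce a relation carrying a spurious sign in the $-1$-eigenspace, which is removed by the phase rotation $u_-\mapsto iu_-$ at no cost to orthonormality or to the span. Everything else is routine bookkeeping with translations and the involution $\sigma$, and the orthonormality count is automatic since fixed points contribute one generator each and transpositions contribute two, matching the rank $r$.
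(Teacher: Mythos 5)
Your proof is correct and follows essentially the same route as the paper's: identify the involution induced by $I$ on the generators modulo translation, handle fixed points directly, and on each transposition pass to $\tfrac{1}{\sqrt2}(\psi_a+\psi_{b,\mathfrak c})$ and $\tfrac{i}{\sqrt2}(\psi_a-\psi_{b,\mathfrak c})$, which is exactly your $u_+$ and $v=iu_-$. The only difference is that you spell out the bookkeeping (that $\sigma$ is an involution with $\delta$ constant on orbits, and the verification that the new generators have rescaled Wannier center $0$ on transposition blocks) which the paper leaves implicit.
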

\begin{proof}
The ``if" direction is obvious. We prove the ``only if" direction. Suppose $\mathcal{W}$ is a Wannier basis compatible with the $I$-symmetry and $\psi_{a,0}\in\mathcal{W}$. By assumption, there exists $\psi_{b,\mathfrak{c}}\in \mathcal{W}$ such that
\begin{equation*}
    I\psi_{a,0}=\psi_{b,\mathfrak{c}}.
\end{equation*}
If $a=b$, then by the definition of $I$, we have
\begin{equation*}
    I\psi_{a,\gamma}(x)=I(\psi_{a,0}(\cdot-\gamma))=(I\psi_{a,0})(x+\gamma)=\psi_{a,\mathfrak{c}-\gamma}(x).
\end{equation*}
If $a\neq b$, then we let
\begin{equation*}
    \tilde{\psi}_{a,0}:=\frac{1}{\sqrt{2}}\left(\psi_{a,0}+\psi_{b,\mathfrak{c}}\right),\quad \tilde{\psi}_{b,0}:=\frac{i}{\sqrt{2}}\left(\psi_{a,0}-\psi_{b,\mathfrak{c}}\right).
\end{equation*}
Then $I\tilde{\psi}_{a,0}=\tilde{\psi}_{a,0}$, $I\tilde{\psi}_{b,0}=\tilde{\psi}_{b,0}$ and $\{\tilde{\psi}_{a,\gamma},\tilde{\psi}_{b,\gamma}\}_{\gamma\in\Gamma}$ give an orthonormal basis of the space $\overline{\mathrm{span}_{\CC}(\psi_{a,\gamma},\psi_{b,\gamma})}$. We can apply this process to the remaining basis and get \eqref{eq:WB-c}.
\end{proof}

By taking the Bloch transform \eqref{eq:BFT} with $\varphi_a = \mathcal{B}\psi_a$, equation \eqref{eq:WB-c} translates into
\begin{equation}
\label{eq:conditions}
    I\varphi_a(k,x)=e^{i\langle\mathfrak{c}_a,k\rangle}\varphi_a(k,x), \quad a\in\{1, \cdots r\}.
\end{equation} 
Hence, each $\varphi_a(k,\cdot)$ is a section of the real subbundle
\begin{equation}
\label{eq:realB-c}
    \mathcal{E}_{\mathfrak{c}_a} :=\{(k,v)\in \mathcal{E}_{\mathbb{C}}: I v= e^{i\langle \mathfrak{c}_a,k\rangle}v\}\subset \mathcal{E}_{\CC},
\end{equation}
where $\mathcal{E}_{\mathfrak{c}}$ is a subbundle of the complex vector bundle $\mathcal{E}_{\mathbb{C}} \cong \mathcal{E}_{\RR}\otimes \CC$. 

\subsection{Wannier centers}
We continue with a more detailed discussion on Wannier centers. 
For the real subbundle $\mathcal{E}_{\mathfrak{c}}$ defined in \eqref{eq:realB-c}, 
We claim
\begin{equation}
\label{eq:tensorB-c}
    \mathcal{E}_{\mathfrak{c}}\cong \mathcal{E}_{\RR}\otimes \mathcal{L}_{\mathfrak{c}}
\end{equation}
where $\mathcal{L}_{\mathfrak{c}}$ is the real line bundle, corresponding to the Stiefel--Whitney class $\mathfrak{c}\in \Gamma/(2\Gamma)\cong H^1(\RR^n/\Gamma^*;\ZZ/2)$. In other words, $\mathcal{L}_{\mathfrak{c}}$ is a real line bundle contained in the trivial complex line bundle $\TT^d\times \CC$ satisfying
\begin{equation}\label{eq:line-bundle-Lc}
    \mathcal{L}_{\mathfrak{c}} := \{(k,z)\in \TT^d\times \CC: \overline{z}=e^{i\langle \mathfrak{c},k\rangle}z\}.
\end{equation}
The isomorphism \eqref{eq:tensorB-c} follows from identifying $\mathcal{E}_{\CC}$ with $\mathcal{E}_{\RR}\otimes \CC$, where $I$ acts trivially on $\mathcal{E}_{\RR}$ and acts as a conjugation on $\CC$. In particular, $\mathcal{E}_{\mathfrak{c}} \cong \mathcal{E}_{\mathfrak{c'}}$ if $\mathfrak{c}-\mathfrak{c}'\in 2\Gamma$ as $\mathcal{L}_{\mathfrak{c}}$'s are classified by the first Stiefel--Whitney class $w_1\in H^1(\RR^n/\Gamma^*;\ZZ/2)\cong \Gamma/(2\Gamma)$. 

We now give a more physically relevant definition of the Wannier center. 
\begin{defi}
\label{def:wc}
    Let $\langle \bullet \rangle^{1/2} w(\varphi) \in L^2(\mathbb R^d)$. The \emph{Wannier center} $\mathfrak{c}(w(\varphi))$ is defined as the expectation of the position operator
    \[ \mathfrak{c}(w(\varphi)):=\int_{\mathbb R^d} \vert w(\varphi)\vert^2 x \,dx \in \mathbb R^d.\]
\end{defi}
Since Wannier centers are the expectation of the position of charge carriers, they are directly related to the polarization where $q$ is the electric charge of the particles \cite{MV,M+12}
\[ P = q\mathfrak{c}(w(\varphi)).\]
We now clarify the relation between the two definitions of (rescaled) \emph{Wannier centers} as the expectation of the position in Definitions \ref{def:wc} and phase factors (of the $I$-symmetry) in Proposition  \ref{def:WB-c}. In particular, Wannier centers give rise to the first Stiefel--Whitney class of the associated real line bundle \eqref{eq:realB-c}. 

\begin{prop}
\label{prop:Berry}
    Let $u \in \mathcal{H}_{\tau} \cap H^{\frac{1}{2}}_{\text{loc}}(\mathbb R^d;L^2(\mathbb R^d/\Gamma))$ be a normalized section. Suppose $Iu_k = e^{i\langle \mathfrak c,k\rangle}u_k$ (as in Proposition \ref{def:WB-c}) is anti-unitary. Then for $\mathfrak c(w(u))$ as in Definition \ref{def:wc},
    \[ \mathfrak c(w(u))=\frac{\mathfrak c}{2}.\]
\end{prop}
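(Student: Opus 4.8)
The plan is to compute the Wannier center $\mathfrak c(w(u)) = \int_{\RR^d} |w(u)(x)|^2 x\,dx$ directly from the Bloch-transform formula \eqref{eq:Wannier} together with the symmetry constraint $Iu_k = e^{i\langle\mathfrak c,k\rangle}u_k$. First I would rewrite the position expectation on the Bloch side: since $\mathcal B$ is an isometry intertwining multiplication by $x$ on $L^2(\RR^d)$ with the operator $i\nabla_k$ acting on $\mathcal H_\tau$ (this is the standard fact that the position operator becomes a covariant derivative in the Bloch representation -- one checks it on \eqref{eq:BFT} by differentiating the exponential $e^{-i\langle x+\gamma,k\rangle}$ in $k$), we get
\begin{equation*}
    \mathfrak c(w(u)) = \langle w(u), x\, w(u)\rangle_{L^2(\RR^d)} = \langle u, i\nabla_k u\rangle_{\mathcal H_\tau} = \frac{i}{|\RR^d/\Gamma^*|}\int_{\RR^d/\Gamma^*} \langle u_k, \nabla_k u_k\rangle_{L^2(\RR^d/\Gamma)}\,dk,
\end{equation*}
i.e. the Wannier center is the integral of the Berry connection one-form $\mathcal A(k) := i\langle u_k,\nabla_k u_k\rangle$. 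The hypothesis $u\in H^{1/2}_{\mathrm{loc}}$ is exactly what is needed to make this pairing and the subsequent manipulations legitimate; a short approximation argument (density of smooth sections) reduces to the case where $u_k$ is differentiable in $k$.

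Next I would exploit the constraint. Applying the anti-unitary $I$ and using $I^2 = 1$: from $Iu_k = e^{i\langle\mathfrak c,k\rangle}u_k$ and anti-linearity of $I$ we differentiate in $k$ to get $I(\nabla_k u_k) = i\mathfrak c\, e^{i\langle\mathfrak c,k\rangle}u_k \cdot(-1)$... more carefully: $I\nabla_k u_k = \nabla_k(I u_k) $ does not hold because $I$ is anti-linear, but $I$ commutes with $\partial_{k_j}$ as a real-linear operator, so $I(\partial_{k_j} u_k) = \partial_{k_j}(Iu_k)$ when we treat $Iu_k$ as a function of $k$ -- and since $I u_k = e^{i\langle\mathfrak c,k\rangle} u_k$, we obtain $I(\partial_{k_j}u_k) = i\mathfrak c_j e^{i\langle\mathfrak c,k\rangle}u_k + e^{i\langle\mathfrak c,k\rangle}\partial_{k_j}u_k$. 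Now since $I$ is anti-unitary, $\langle u_k,\partial_{k_j}u_k\rangle = \overline{\langle Iu_k, I\partial_{k_j}u_k\rangle}$. Plugging in $Iu_k = e^{i\langle\mathfrak c,k\rangle}u_k$ and the expression for $I\partial_{k_j}u_k$, and cancelling the phases (the $e^{i\langle\mathfrak c,k\rangle}$ from $Iu_k$ and its conjugate from the other factor combine correctly), one gets
\begin{equation*}
    \langle u_k,\partial_{k_j}u_k\rangle = \overline{\,i\mathfrak c_j + \langle u_k,\partial_{k_j}u_k\rangle\,} = -i\mathfrak c_j + \overline{\langle u_k,\partial_{k_j}u_k\rangle}.
\end{equation*}
Writing $\langle u_k,\partial_{k_j}u_k\rangle = i\,t_j(k)$ with $t_j$ real (valid since $u_k$ is normalized, so $\partial_{k_j}\|u_k\|^2 = 2\Re\langle u_k,\partial_{k_j}u_k\rangle = 0$), the identity becomes $i t_j = -i\mathfrak c_j - i t_j$, hence $t_j(k) = -\mathfrak c_j/2$ pointwise in $k$. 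Therefore $\mathcal A_j(k) = i\langle u_k,\partial_{k_j}u_k\rangle = i\cdot i t_j(k) = -t_j(k) = \mathfrak c_j/2$ wait -- I need to track the sign convention; in any case the Berry connection is the constant $\mathfrak c/2$ (up to the fixed sign convention in \eqref{eq:Wannier}), and integrating the constant over the normalized torus gives $\mathfrak c(w(u)) = \mathfrak c/2$.

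The main obstacle is the bookkeeping around the anti-unitarity of $I$ combined with the $k$-derivative: one must be careful that $I$ is real-linear (so it does commute with $\partial_{k_j}$) but complex-anti-linear (so it conjugates inner products and introduces the conjugation that flips the sign of $\mathfrak c$), and that the phase $e^{i\langle\mathfrak c,k\rangle}$ is not $\Gamma^*$-periodic, so the "section" $u_k$ of the subbundle $\mathcal E_{\mathfrak c}$ is only $\tau$-equivariant, not literally periodic -- however the Berry connection one-form $\mathcal A(k)\,dk$ and the position-operator identity are insensitive to this because the $\langle u_k,\partial_{k_j}u_k\rangle$ pairing is gauge-covariant and the equivariance factor is unitary. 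A secondary technical point is justifying the integration by parts / isometry identity $\langle w(u), x w(u)\rangle = \langle u, i\nabla_k u\rangle$ under only the $H^{1/2}$ hypothesis rather than $H^1$; this is handled by noting $\langle\bullet\rangle^{1/2}w(u)\in L^2$ is assumed, which is precisely the $H^{1/2}$ regularity in $k$, and then approximating.
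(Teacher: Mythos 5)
Your proposal is correct and follows essentially the same route as the paper: pass to the Bloch side where the position operator becomes $i\nabla_k$, identify the Wannier center with the integral of the Berry connection, and then use the anti-unitarity of $I$ together with $\partial_{k_j}\|u_k\|^2=0$ and the phase relation $Iu_k=e^{i\langle\mathfrak c,k\rangle}u_k$ to pin down $\langle u_k,\partial_{k_j}u_k\rangle$. The only cosmetic difference is that you obtain the Berry connection pointwise equal to $\mathfrak c/2$ in every direction, whereas the paper integrates the Berry phase along the lattice directions in $d=2$; the sign ambiguity you flag resolves consistently to $\mathfrak c/2$ either way.
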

\begin{proof}
For simplicity, we shall give our proof in the dimension $d=2$. We start by recalling that after conjugating by the Bloch--Floquet transform $\mathcal U x \mathcal U^{-1} = i\nabla_k$ which shows that 
\[ \begin{split} \mathfrak{c}(w(u)) &= \langle w(u),x w(u) \rangle_{L^2(\RR^d)} = \frac{\langle \mathcal U w(u),(\mathcal U x \mathcal U^{-1}) \mathcal U w(u) \rangle_{L^2(\mathbb R^d/\Gamma \times \mathbb R^d/\Gamma^*)}}{ \vert v_1 \wedge v_2 \vert } \\
&=  -\frac{i \langle u,\nabla_k u \rangle_{L^2(\mathbb R^d/\Gamma \times \mathbb R^d/\Gamma^*)}}{ \vert v_1 \wedge v_2 \vert }, \end{split}\]
where  $\Gamma^* = \mathbb Z v_1 + \mathbb Zv_2$. With the Berry connection $A(k)=-i\langle u_k,\nabla_k u_k\rangle_{L^2(\mathbb R^d/\Gamma )}$ and unit vector $e_{v} = \frac{v}{\Vert v \Vert}$
\[\begin{split} \langle \mathfrak{c}(w(u)), e_{v_1} \rangle &= \frac{\vert v_1 \wedge v_2 \vert}{ \vert v_1 \wedge v_2 \vert} \int_0^1 \int_{0}^1 \langle A(t_1v_1 + t_2 v_2),e_{v_1} \rangle \ dt_1 \ dt_{2} \\ 
&=  \int_0^1 \int_{0}^1 \langle A(t_1v_1 + t_2 v_2),e_{v_1} \rangle \ dt_1 \ dt_{2} \\ 
&= \frac{1}{\Vert v_1\Vert} \int_0^1 \int_{0}^1 \langle A(t_1v_1 + t_2 v_2),v_1 \rangle \ dt_1 \ dt_{2} = \frac{1}{\Vert v_1\Vert} \int_0^1 \gamma_{v_1}(t_2) \ dt_{2},
\end{split}  \]
where $\gamma_{v_1}$ is the Berry phase in $v_1$ direction \[ \gamma_{v_1}(t_2) = \int_0^1 \langle A(t_1 v_1+ t_2v_2),v_1 \rangle dt_1.\] The same computation applies to $e_{v_2}$ with $1$ replaced by $2$ everywhere. 

More can be said about the Berry phase when symmetries are enforced. For $Iu_k = e^{i\phi(k)}u_k$ and $\phi(k):=\langle \mathfrak c_{a},k\rangle,$ a simple computation shows that 
\[\begin{split}  \langle u_k,i\nabla_k u_k \rangle  &=-i \langle u_k,\nabla_k u_k \rangle = \left\langle \frac{1}{i} \nabla_k(Iu_k), Iu_k \right\rangle \\
&= \left\langle \frac{1}{i} \nabla_k (e^{i\phi(k)}u_k),  e^{i\phi(k)}u_k \right\rangle = \nabla \phi(k)  -i\langle  \nabla_k u_k,  u_k \rangle.
\end{split}\]
Using that $0=\nabla_k \Vert u_k \Vert^2 = \langle  \nabla_k u_k, u_k\rangle +  \langle  u_k, \nabla_k u_k\rangle$, we find 
\begin{equation}
\label{eq:Berry-SW}
 2\gamma_{v_1} = \int_0^1 \langle \nabla \phi(t_1 v_1+t_2v_2),v_1\rangle \ dt_1 =\langle \mathfrak c_{a},v_1 \rangle 
 \end{equation}
such that 
\[\langle \mathfrak{c}(w(u)), e_{v_j} \rangle = \frac{ \langle \mathfrak c_{a},v_j \rangle}{2 \Vert v_j\Vert} =  \frac{\langle \mathfrak c_{a},e_{v_j} \rangle}{2},\ j=1,2 \Longrightarrow \mathfrak{c}(w(u)) =\frac{\mathfrak{c}_a}{2}. \qedhere \]
\end{proof}

\begin{rem}
It follows from the proof that the Wannier center ${\mathfrak{c}}_a$ in \eqref{eq:realB-c} is completely determined by the Berry phase by \eqref{eq:Berry-SW}.
\end{rem}

\section{Fragile topology}
\label{sec:Frag_topo}
In this section, we consider a complex Bloch bundle $\mathcal{E}_{\CC}$ of rank $r\geq 2$ over $\TT^2$ or $\TT^3$ equipped with the $I$-symmetry in \eqref{eq:I}. Thus we get a real vector subbundle $\mathcal{E}_{\RR}$ with (real) rank $r\geq 2$, i.e.
\begin{equation}\label{eq:bloch-defi}
    \mathcal{E}_{\RR}=\{(x,v)\in \mathcal{E}_{\CC}: Iv=v\},\quad \mathcal{E}_{\CC}=\mathcal{E}_{\RR}\otimes \CC.
\end{equation}
We show that for $r=2$ there is an exponentially localized Wannier basis compatible with the $I$-symmetry if and only if the Euler class $e(\mathcal{E}_{\RR})=0$, whereas for $r\geq 3$ we show that there is always an exponentially localized Wannier basis compatible with the $I$-symmetry.

\subsection{Wannier basis and line bundles}
We first prove an equivalent statement of the existence of an exponentially localized Wannier basis compatible with the $I$-symmetry.
\begin{prop}
\label{prop:Isym}
    The real subbundle $\mathcal{E}_{\RR}$ in \eqref{eq:bloch-defi} over $\mathbb T^d$ can be split into a direct sum of analytic line bundles if and only if there is an exponentially localized Wannier basis $\{\varphi_{\gamma}^{a}(x): \gamma\in\Gamma, a\in\{1,\cdots, r\}\}$ (with some Wannier centers) compatible with the $I$-symmetry. 
\end{prop}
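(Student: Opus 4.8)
The plan is to reduce the statement to the existence of a real analytic, $\tau$-equivariant orthonormal Bloch frame adapted to the $I$-symmetry, and then pass between splittings of $\mathcal{E}_\RR$ and such frames. Concretely, Proposition~\ref{def:WB-c} together with \eqref{eq:conditions} identifies a Wannier basis compatible with the $I$-symmetry, with rescaled Wannier centers $\mathfrak c_1,\dots,\mathfrak c_r\in\Gamma$, with a global Bloch frame $\Phi=(\varphi_a)_{a=1}^r$ in which each $\varphi_a(k,\cdot)$ is a section of the real subbundle $\mathcal E_{\mathfrak c_a}$ of \eqref{eq:realB-c}; moreover, by the localization dichotomy recalled in Section~\ref{sec:Chern_review} together with the classical Paley--Wiener argument (a real analytic section over the compact torus $\TT^d$ extends holomorphically to a fixed tube $\{|\Im k|<\delta\}$), exponential localization of the Wannier functions $w(\varphi_a)$ is equivalent to $\Phi$ being real analytic in $k$. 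Hence it suffices to show that $\mathcal E_\RR$ splits as a direct sum of analytic line bundles if and only if there exist $\mathfrak c_1,\dots,\mathfrak c_r\in\Gamma$ and a real analytic orthonormal frame $(\varphi_a)$ of $\mathcal E_\CC$ with $\varphi_a$ a section of $\mathcal E_{\mathfrak c_a}$.

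For the implication ``frame $\Rightarrow$ splitting'', given such a frame I would set $\ell^a_k:=e^{i\langle\mathfrak c_a,k\rangle/2}\,\RR\,\varphi_a(k)$: the phase is defined only up to sign, but the real line it spans is not, and from $I\varphi_a(k)=e^{i\langle\mathfrak c_a,k\rangle}\varphi_a(k)$ one checks that $I$ fixes $e^{i\langle\mathfrak c_a,k\rangle/2}\varphi_a(k)$, so $\ell^a$ is a $\tau$-invariant real analytic line subbundle of $\mathcal E_\RR$; since the $\varphi_a(k)$ form a $\CC$-basis of $\mathcal E_\CC$ at every $k$ and $\mathcal E_\RR$ is a real form of $\mathcal E_\CC$ (cf.~\eqref{eq:bloch-defi}), the $\ell^a$ sum fiberwise to $\mathcal E_\RR$, hence $\mathcal E_\RR=\bigoplus_{a=1}^r\ell^a$. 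For ``splitting $\Rightarrow$ frame'', I would start from $\mathcal E_\RR=\bigoplus_{a=1}^r\ell_a$ with $\ell_a$ real analytic; replacing $\ell_2$ by the orthogonal complement of $\ell_1$ in $\ell_1\oplus\ell_2$, and so on, one may assume the sum is orthogonal (this preserves real analyticity and each $w_1(\ell_a)$). Using $H^1(\TT^d;\ZZ/2)\cong\Gamma/2\Gamma$ and the model bundles $\mathcal L_{\mathfrak c}$ from \eqref{eq:line-bundle-Lc}, pick $\mathfrak c_a\in\Gamma$ with $\mathcal L_{\mathfrak c_a}\cong\ell_a$ as real analytic line bundles (legitimate by Remark~\ref{rem:analytic}). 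Then \eqref{eq:tensorB-c} gives $\mathcal E_{\mathfrak c_a}\cong\mathcal E_\RR\otimes\mathcal L_{\mathfrak c_a}=\bigoplus_b\ell_b\otimes\mathcal L_{\mathfrak c_a}$, whose $b=a$ summand $\ell_a\otimes\mathcal L_{\mathfrak c_a}\cong\ell_a\otimes\ell_a$ is trivial and hence carries a global nonvanishing real analytic section; normalizing it yields a unit section $\varphi_a$ of $\mathcal E_{\mathfrak c_a}\subset\mathcal E_\CC$ lying inside $\ell_a\otimes\CC$. Since the subbundles $\ell_a\otimes\CC$ are mutually orthogonal with sum $\mathcal E_\CC$, the family $(\varphi_a)$ is a real analytic orthonormal Bloch frame of the required type, and running the dictionary backwards produces the compatible, exponentially localized Wannier basis.

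The step I expect to be the crux is the choice of $\mathfrak c_a$ and the resulting triviality of $\ell_a\otimes\mathcal L_{\mathfrak c_a}$. In coordinates, the global section of this summand is $e^{-i\langle\mathfrak c_a,k\rangle/2}$ times a local unit section of $\ell_a$; both factors are defined only up to sign on $\TT^d$, and the whole point of matching $w_1(\mathcal L_{\mathfrak c_a})$ with $w_1(\ell_a)$ is that the two $\ZZ/2$ monodromies cancel, so that the product is genuinely single-valued, analytic and $\tau$-equivariant over the torus. Verifying this cancellation -- equivalently, that every first Stiefel--Whitney class of a line summand of $\mathcal E_\RR$ is realized by a rescaled Wannier center, rendering the corresponding summand of $\mathcal E_{\mathfrak c_a}$ trivial -- is the heart of the proof; the remaining ingredients are the already-established isomorphism \eqref{eq:tensorB-c}, the orthogonalization bookkeeping, and the standard equivalence between real analyticity of a Bloch frame and exponential decay of its Wannier functions.
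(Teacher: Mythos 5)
Your proposal is correct and follows essentially the same route as the paper: the dictionary of Proposition \ref{def:WB-c} and \eqref{eq:conditions}, the identification $\mathcal{E}_{\mathfrak c_a}\cong\mathcal{E}_{\RR}\otimes\mathcal{L}_{\mathfrak c_a}$ from \eqref{eq:tensorB-c}, the triviality of $\ell_a\otimes\mathcal{L}_{\mathfrak c_a}$ once $\mathfrak c_a$ represents $w_1(\ell_a)$, and the standard equivalence between real analytic Bloch frames and exponentially localized Wannier functions. The only (harmless) deviation is in the converse direction, where you produce all $r$ real line subbundles of $\mathcal{E}_{\RR}$ simultaneously via the half-phase sections $e^{i\langle\mathfrak c_a,k\rangle/2}\varphi_a(k)$ (well-defined as real lines despite the sign ambiguity), while the paper instead iteratively splits the trivial summand $\RR s_a$ off $\mathcal{E}_{\mathfrak c_a}$ and tensors back with $\mathcal{L}_{\mathfrak c_a}$; both rest on the same facts.
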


\begin{proof}
    We use the equivalent definition of compatibility in Proposition \ref{def:WB-c}. 
    If $\mathcal{E}_{\RR} = L_1\oplus \cdots \oplus L_r$ is a direct sum of analytic line bundles, we may take the line bundles to be orthogonal. Let $\mathfrak{c}_a\in \Gamma$ be a representative of the Stiefel--Whitney class $w_1(L_a)\in H^1(\RR^d/\Gamma; \ZZ/2) = \Gamma/2\Gamma$. As $L_a\otimes \mathcal{L}_{\mathfrak{c}_a}$ is a trivial line bundle in $\mathcal{E}_{\RR}\otimes \mathcal{L}_{\mathfrak{c}_a}$ and by orthogonality of $\{L_a\}_{1\leq a\leq r}$, there exist analytic orthonormal sections $s_a$ of the tensor bundle $\mathcal{E}_{\RR}\otimes \mathcal{L}_{\mathfrak{c}_a} =\mathcal{E}_{\mathfrak{c}_a}$, for $a=1,2,\cdots, r$. Taking the Bloch transform of $s_a$, we can construct exponentially localized Wannier functions satisfying
    \begin{equation*}
        I \varphi_{\gamma}^{a} = \varphi_{\mathfrak{c}_a-\gamma}^{a},\quad a\in \{1,\cdots, r\}.
    \end{equation*}

    On the other hand, if there exists an exponentially localized Wannier basis with rescaled Wannier centers at $\{\mathfrak{c}_a\}_{1\leq a\leq r}$, then the inverse Bloch transform would give analytic orthonormal sections $s_a$ of bundles $\mathcal{E}_{\mathfrak{c}_a} =\mathcal{E}_\RR\otimes \mathcal{L}_{\mathfrak{c}_a}$, $a=1,2,\cdots, r$. Now we claim $\mathcal{E}_\RR \cong \mathcal{L}_{\mathfrak{c}_1}\oplus \cdots \oplus \mathcal{L}_{\mathfrak{c}_r}$. 

   We can split off a trivial bundle $L_1 \cong \RR s_1$: $$\mathcal{E}_{\mathfrak{c}_1} = L_1 \oplus \mathcal{E}'_{\mathfrak{c}_1}.$$
    Tensoring $\mathcal{E}_{\mathfrak{c}_1}$ with $\mathcal{L}_{\mathfrak{c}_1}$ yields that
    \[\mathcal{E}_{\RR} = (L_1\otimes \mathcal{L}_{\mathfrak{c}_1}) \oplus (\mathcal{E}'_{\mathfrak{c}_1}\otimes \mathcal{L}_{\mathfrak{c}_1}) \cong 
    \mathcal{L}_{\mathfrak{c}_1} \oplus (\mathcal{E}'_{\mathfrak{c}_1}\otimes \mathcal{L}_{\mathfrak{c}_1}).\]
    Repeating this construction gives the desired direct sum decomposition. 
\end{proof}


\subsection{Wannier obstructions for rank two vector bundles}
In this section, we show that an oriented rank two real Bloch bundle $\mathcal{E}_\RR$ over the torus in dimension $d\le 3$ exhibits an exponentially localized Wannier basis compatible with the $I$-symmetry if and only if the Euler class $e(\mathcal{E}_{\RR}) = 0$. In fact, we have the following more general result. 
\begin{prop}
\label{prop:rank2}
    A rank two oriented real vector bundle $\mathcal{E}_\RR$ over $\TT^d$ can be split into the direct sum of line bundles if and only if $e(\mathcal{E}_{\RR}) = 0$. In particular, $\mathcal{E}_\RR$ is a trivial bundle when it splits.
\end{prop}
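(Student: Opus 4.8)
The plan is to route both implications through the standard identification of an oriented rank $2$ real bundle with a complex line bundle, under which the Euler class becomes the first Chern class, together with the fact that $H^2(\TT^d;\ZZ)\cong\ZZ^{\binom d2}$ is torsion-free. Concretely, fixing a fiber metric on $\mathcal{E}_\RR$ and letting $J$ be fiberwise rotation by $90^\circ$ with respect to the orientation makes $L:=(\mathcal{E}_\RR,J)$ a complex line bundle with $c_1(L)=e(\mathcal{E}_\RR)$ (see \cite[Lemma~14.1]{MS74}, recalled in Section~\ref{sec:Chern}).

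First I would dispatch the direction $e(\mathcal{E}_\RR)=0\Rightarrow\mathcal{E}_\RR$ splits (and is in fact trivial). With the identification above, $c_1(L)=e(\mathcal{E}_\RR)=0$; since complex line bundles over any manifold are classified by $c_1\in H^2(M;\ZZ)$ (Subsection~\ref{sec:bundle}), $L$ is trivial, and forgetting $J$ gives $\mathcal{E}_\RR\cong\TT^d\times\RR^2=\RR\oplus\RR$.

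For the converse I would start from a splitting $\mathcal{E}_\RR\cong\ell_1\oplus\ell_2$ into real line bundles. Orientability of $\mathcal{E}_\RR$ gives $0=w_1(\mathcal{E}_\RR)=w_1(\ell_1)+w_1(\ell_2)$ by \eqref{eq:SW-sum}, so $w_1(\ell_1)=w_1(\ell_2)$; since real line bundles are classified by $w_1\in H^1(M;\ZZ/2)$ (Subsection~\ref{sec:bundle}), this forces $\ell_1\cong\ell_2=:\ell$, hence $\mathcal{E}_\RR\cong\ell\oplus\ell$. Now $\ell\oplus\ell$ is canonically the underlying real bundle of the complex line bundle $\ell\otimes_\RR\CC$ (complex structure: multiplication by $i$ on the $\CC$-factor), and fiberwise complex conjugation is an antilinear isomorphism $\ell\otimes_\RR\CC\cong\overline{\ell\otimes_\RR\CC}$, so $c_1(\ell\otimes_\RR\CC)=-c_1(\ell\otimes_\RR\CC)$, i.e.\ $2c_1(\ell\otimes_\RR\CC)=0$; torsion-freeness of $H^2(\TT^d;\ZZ)$ then yields $c_1(\ell\otimes_\RR\CC)=0$. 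Since the given orientation on $\mathcal{E}_\RR$ agrees with, or is opposite to, the one induced by this complex structure, $e(\mathcal{E}_\RR)=\pm c_1(\ell\otimes_\RR\CC)=0$; and $c_1(\ell\otimes_\RR\CC)=0$ also makes $\ell\otimes_\RR\CC$, hence $\mathcal{E}_\RR$, trivial, which gives the ``in particular'' clause.

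The one place that needs care — and what I would flag as the main obstacle — is that the line bundles occurring in a splitting need not be orientable, so one cannot simply invoke the Whitney product formula $e(\ell_1\oplus\ell_2)=e(\ell_1)\smile e(\ell_2)$ (the Euler classes of non-oriented line bundles are not defined). Recognizing that the splitting forces $\ell_1\cong\ell_2$ and then passing to the complexification $\ell\otimes_\RR\CC$, whose Chern class is $2$-torsion and hence vanishes on the torus, is what circumvents this; the only residual bookkeeping is to check that the complex structure on $\ell\otimes_\RR\CC$ induces the canonical orientation of $\ell\oplus\ell$ (which it does), but since we only need $e(\mathcal{E}_\RR)=0$ the overall sign plays no role.
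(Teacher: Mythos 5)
Your proof is correct, but for the substantive direction (splitting implies $e(\mathcal{E}_\RR)=0$) it takes a genuinely different route from the paper. The paper argues geometrically: orientability forces $w_1(L_1)=w_1(L_2)$, each $L_i$ admits a section vanishing precisely on a subtorus Poincar\'e dual to $w_1(L_i)$, and since these subtori can be chosen parallel and disjoint, $s_1+s_2$ is a nowhere-vanishing section of $\mathcal{E}_\RR$, whose orthogonal complement is then a trivial line bundle, so $\mathcal{E}_\RR$ is trivial. You instead upgrade $w_1(\ell_1)=w_1(\ell_2)$ to $\ell_1\cong\ell_2=:\ell$ (legitimate, since real line bundles are classified by $w_1$), identify $\ell\oplus\ell$ with the underlying real bundle of the complexification $\ell\otimes_\RR\CC$, and use the antilinear conjugation isomorphism to get $2c_1(\ell\otimes_\RR\CC)=0$, killed by torsion-freeness of $H^2(\TT^d;\ZZ)$. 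Both are valid; your version isolates exactly where the torus enters (absence of $2$-torsion in $H^2$), and it is consistent with the paper's Remark that the statement fails over $\mathbb{RP}^2$, where $H^2(\mathbb{RP}^2;\ZZ)\cong\ZZ/2$ is pure torsion. The paper's version is more hands-on but leans on the specific fact that $w_1$ classes on a torus are represented by parallel subtori. You also spell out the easy direction ($e=0$ implies triviality, hence splitting) via the complex line bundle classification, which the paper leaves implicit in its classification Proposition \ref{prop:real_class}; and your caveat that one cannot blindly apply the Whitney product formula for Euler classes of possibly non-orientable summands is well taken.
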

\begin{proof}
    Suppose $\mathcal{E}_\RR = L_1\oplus L_2$, where $L_1$ and $L_2$ are orthogonal line bundles to each other. Then there exists a section $s_1$ of $L_1$ that vanishes on a subtorus $\gamma_1$ representing $w_1(L_1)$. Similarly, there exists another section $s_2$ of $L_2$ that vanishes on a subtorus $\gamma_2$ representing $w_1(L_2)$. Since $L_1\oplus L_2$ is orientable, $w_1(L_1)=w_1(L_2)$. Therefore, we can choose $\gamma_1$ and $\gamma_2$ to be parallel subtori so that $s_1+s_2$ is nonvanishing everywhere. Thus, $\mathcal{E}_\RR$ contains a non-vanishing section and is trivial.
\end{proof}

\begin{rem}
    This proposition is not true over $\mathbb{RP}^2$. Let $L$ be the tautological line bundle over $\mathbb{RP}^2$, then $L\oplus L$ is not trivial.
\end{rem}

\begin{proof}[Proof of Theorem \ref{theo:main1}]
     By Proposition \ref{prop:Isym}, the existence of an exponentially localized Wannier basis compatible with the $I$-symmetry is equivalent to the decomposability of the real Bloch bundle $\mathcal{E}_\RR$ into a direct sum of line bundles. Proposition \ref{prop:rank2} shows that real bundles of rank two are decomposable to line bundles if and only if the Euler class $e(\mathcal{E}_{\RR})=0$. This shows that \eqref{part1} is equivalent to \eqref{part2}. Obviously \eqref{part2} implies \eqref{part3}.
     
     In order to prove \eqref{part3} implies \eqref{part1}, we note that a Wannier basis compatible with the $I$-symmetry satisfying \eqref{eq:main1-2} gives a normalized section of the real vector bundle $\mathcal{E}_{\RR}\otimes \mathcal{L}$ for some real line bundle $\mathcal{L}$ in $H^1(\TT^d; \mathcal{E}_{\RR}\otimes \mathcal{L})$. Since $\mathcal{E}_{\RR}$ is oriented and has rank $2$, by \eqref{eq:SW-tensor}, $\mathcal{E}_{\RR}\otimes \mathcal{L}$ is also oriented and has rank $2$. We identify it with a complex line bundle, where the Euler class is identified with the Chern class. By \cite[Theorem 7.1]{MPPT18}, this implies the Euler class is trivial, i.e., $\mathcal{E}_{\RR}\otimes \mathcal{L}$ is trivial. This further implies that $\mathcal{E}_{\RR}\cong \mathcal{L}\oplus \mathcal{L}$, which implies $\mathcal{E}_{\RR}$ is trivial by Proposition~\ref{prop:rank2}.
\end{proof}

\subsection{Fragile topology for higher rank vector bundles}
In this section, we consider the so-called fragile topology for three or more Bloch bands over $\TT^2$ or $\TT^3$. 
\begin{prop}
\label{prop:rankg3}
    Let $M$ be $\TT^2$ or $\TT^3$ and $\mathcal{E}_\RR$ be a real vector bundle of rank $\geq 3$ over $M$. Then $\mathcal{E}_\RR$ can be written as a direct sum of real line bundles.
\end{prop}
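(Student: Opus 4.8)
The plan is to reduce to the rank~$3$ case and then construct the desired splitting by matching the Stiefel--Whitney classes that classify the bundle.

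First I would handle the reduction. If $r\geq 4$, then $r>\dim M$, so the Thom transversality argument from the proof of Proposition~\ref{prop:chern} applies verbatim to the real bundle $\mathcal E_\RR$ and produces a nowhere-vanishing section; splitting off the resulting trivial line bundle and iterating gives $\mathcal E_\RR\cong\RR^{r-3}\oplus\mathcal F$ with $\mathcal F$ a real bundle of rank~$3$ satisfying $w_1(\mathcal F)=w_1(\mathcal E_\RR)$ and $w_2(\mathcal F)=w_2(\mathcal E_\RR)$, by the Whitney sum formula. Since $\RR^{r-3}$ is a direct sum of trivial line bundles, it then suffices to split $\mathcal F$; and because $\dim M\le 3$, Proposition~\ref{prop:real_class} tells us that $\mathcal F$ is determined up to isomorphism by $w_1(\mathcal F)\in H^1(M;\ZZ/2)$ and $w_2(\mathcal F)\in H^2(M;\ZZ/2)$. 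Hence it is enough to exhibit line bundles $L_1,L_2,L_3$ whose direct sum has the same $w_1$ and $w_2$ as $\mathcal F$.

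The key step is then purely algebraic. Since $M=\TT^d$, we have $H^\bullet(M;\ZZ/2)\cong\Lambda^\bullet\big(H^1(M;\ZZ/2)\big)$ with $\dim_{\ZZ/2}H^1(M;\ZZ/2)=d\le 3$, and an elementary check shows that in the second exterior power of a vector space of dimension $\le 3$ every element is decomposable; thus we may write $w_2(\mathcal F)=m_1\smile m_2$ with $m_1,m_2\in H^1(M;\ZZ/2)$. Writing $w:=w_1(\mathcal F)$, I would let $L_1,L_2,L_3$ be the line bundles (unique up to isomorphism, since line bundles are classified by $w_1$) determined by
\[
w_1(L_1)=m_1+w,\qquad w_1(L_2)=m_2+w,\qquad w_1(L_3)=m_1+m_2+w.
\]
A short computation using the Whitney sum formula~\eqref{eq:SW-sum} together with the identity $\alpha\smile\alpha=0$ for every $\alpha\in H^1(\TT^d;\ZZ/2)$ (which holds because $H^\bullet(\TT^d;\ZZ/2)$ is an exterior algebra) then yields $w_1(L_1\oplus L_2\oplus L_3)=w$ and $w_2(L_1\oplus L_2\oplus L_3)=m_1\smile m_2=w_2(\mathcal F)$. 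By Proposition~\ref{prop:real_class} this forces $\mathcal F\cong L_1\oplus L_2\oplus L_3$, and therefore $\mathcal E_\RR\cong\RR^{r-3}\oplus L_1\oplus L_2\oplus L_3$ is a direct sum of line bundles.

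The main obstacle is the one algebraic input above: the decomposability of an arbitrary class in $H^2(\TT^d;\ZZ/2)$ as a cup product of two degree-one classes. This is precisely where the hypothesis $d\le 3$ is used --- on $\TT^4$ the class $x_1\smile x_2+x_3\smile x_4$ is indecomposable, and the construction genuinely breaks down in higher base dimension --- so this is the step that deserves the most care. Everything else is routine bookkeeping with the Whitney sum formula and the exterior-algebra structure of $H^\bullet(\TT^d;\ZZ/2)$.
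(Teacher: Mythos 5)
Your proof is correct and follows the same strategy as the paper's: reduce to rank $3$ by splitting off trivial line bundles, invoke the classification of rank-$3$ real bundles over a base of dimension $\le 3$ by $w_1$ and $w_2$ (Proposition~\ref{prop:real_class}), and realize these classes by an explicit direct sum of three line bundles. The only difference is organizational: where the paper first reduces to the orientable case by tensoring with a line bundle and then enumerates the possible values of $w_2$ over $\TT^2$ and $\TT^3$ case by case, you treat everything at once by observing that every class in $H^2(\TT^d;\ZZ/2)$ with $d\le 3$ is decomposable as $m_1\smile m_2$ and by shifting each $w_1(L_i)$ by $w_1(\mathcal F)$ --- a uniform argument whose output reproduces the paper's explicit choices (e.g.\ your recipe with $w=0$, $m_1=e_1+e_3$, $m_2=e_2+e_3$ gives exactly the paper's last case).
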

\begin{proof}
It suffices to consider the case $r=3$, as any real vector bundle with rank $\geq 4$ over $M$ can split off a trivial line bundle by applying the Thom transversality theorem as in Proposition~\ref{prop:chern}.
Without loss of generality, we may also assume $\mathcal{E}_\RR$ is orientable. Otherwise, we may tensor it with a line bundle to make it orientable (see \eqref{eq:SW-tensor}), then we can tensor it with the same line bundle after splitting to obtain the decomposition of the original bundle.   
    
First, we discuss bundles over $\TT^2$. The orientability of $\mathcal{E}_\RR$ yields $w_1=0\in H^1(\TT^2;\ZZ/2)=\ZZ/2 e_1+\ZZ/2 e_2$. We assume $w_2=ae_1\smile e_2\in H^2(\TT^2;\ZZ/2)=\ZZ/2 e_1\smile e_2$. We can choose three line bundles with $w_1=ae_1, ae_2, ae_1+ae_2$ respectively. The bundle we get from the direct sum has the total Stiefel--Whitney class
\begin{equation*}
    (1+ae_1)(1+ae_2)(1+ae_1+ae_2)=1+ae_1\smile e_2,
\end{equation*}
which agrees with the total Stiefel--Whitney class of $\mathcal{E}_\RR$.

Now we discuss bundles over $\TT^3$. Again, the orientability of $\mathcal{E}_\RR$ yields $w_1=0\in H^1(\TT^3;\ZZ/2)=\ZZ/2 e_1+\ZZ/2 e_2+\ZZ/2 e_3$. We assume 
\begin{equation*}
    w_2=a e_1\smile e_2+b e_2\smile e_3+c e_3\smile e_1\in H^2(\TT^3;\ZZ/2)=\ZZ/2 e_1\smile e_2+\ZZ/2 e_2\smile e_3+\ZZ/2 e_3\smile e_1.
\end{equation*}
There are four cases we need to consider
\begin{itemize}
    \item When $a=b=c=0$, this is a trivial bundle.
    \item When one of $a,b,c$ is nonzero, say $a=1, b=c=0$, we choose three line bundles with $w_1=e_1, e_2, e_1+e_2$ respectively, and use
    \begin{equation*}
        (1+e_1)(1+e_2)(1+e_1+e_2)=1+e_1\smile e_2.
    \end{equation*}
    \item When two of $a,b,c$ are nonzero, say $a=b=1, c=0$, we choose three line bundles with $w_1=e_1+e_3, e_2, e_1+e_2+ e_3$ respectively, and use
    \begin{equation*}
        (1+e_1+e_3)(1+e_2)(1+e_1+e_2+e_3)=1+e_1\smile e_2+e_2\smile e_3.
    \end{equation*}
    \item When all of $a,b,c$ are nonzero, i.e.~$a=b=c=1$, we choose three line bundles with $w_1=e_1+e_3, e_2+e_3, e_1+e_2$ respectively, and use
    \begin{equation*}
        (1+e_1+e_3)(1+e_2+e_3)(1+e_1+e_2)=1+e_1\smile e_2+e_2\smile e_3+e_3\smile e_1.
    \end{equation*}
\end{itemize}
In each case, $\mathcal{E}_\RR$ can be decomposed into the direct sum of three line bundles given by the total Stiefel--Whitney class in the LHS of the equalities.
\end{proof}
One may furthermore take the three line bundles to be orthogonal. For real Bloch bundles, this gives the Wannier centers and Wannier functions compatible with the $I$-symmetry.

\begin{proof}[Proof of Theorem \ref{theo:main2}]
    Again by Proposition \ref{prop:Isym}, the existence of an exponentially localized Wannier basis compatible with the $I$-symmetry is equivalent to the decomposability of the real Bloch bundle $\mathcal{E}_\RR$ into a direct sum of line bundles. For $d\leq 3$, Proposition \ref{prop:rankg3} shows that this is always the case for real bundles with rank $\ge 3$.
\end{proof}

\section{Applications: Fragile topology in the chiral TBG at magic angles}
\label{sec:TBG}
In this section, using the fragile topology framework developed in previous sections, we determine the Wannier centers of the exponentially localized Wannier functions arising from the topologically non-trivial flat bands of twisted bilayer graphene (TBG). In particular, we consider the so-called chiral limit of the TBG. See \cite{magic,beta,Z23} for more detailed discussions on the chiral TBG and \cite{BM11,bz23,bmcone} for the general Bistritzer--Macdonald Hamiltonian. 

The Bloch transformed Hamiltonian is given by \[H_k(\alpha) = \begin{pmatrix}
    0 & D(\alpha)^* + \bar k \\ D(\alpha) +k& 0 \end{pmatrix}\text{ with }D(\alpha) = \begin{pmatrix}
        2D_{\bar z} & \alpha U(z) \\ \alpha U(-z) & 2D_{\bar z}
    \end{pmatrix}\] with a potential $U \in C^{\infty}(\CC)$ satisfying for $\gamma \in \Lambda:=\ZZ + \omega \ZZ$ with $\omega  = e^{2\pi i /3}$, $K = \frac{4\pi}{3},$ and $\langle a,b\rangle = \Re(\overline{a}b)$
    \[U(z+\gamma) = e^{i\langle \gamma,K\rangle} U(z), \ U(\omega z) = \omega U(z) \text{ and }\overline{U(\bar z)}=-U(-z), \]such that $H_k$ commutes with the symmetry $I = \begin{pmatrix} 0 & Q \\ Q  &0 \end{pmatrix}$ with $Qu(z)=\overline{u(-z)}.$
    The operator acts on the Hilbert space 
    \[ L^2_0:=\{ v\in L^2_{\text{loc}}(\CC;\CC^4); \mathscr L_{\gamma} u =u \text{ for all }\gamma \in \Lambda\}, \]
    where $L_{\gamma} u(z)=\operatorname{diag}(\omega^{\gamma_1+\gamma_2},1,\omega^{\gamma_1+\gamma_2},1)u(z+\gamma),$ with $\gamma=\gamma_1 + \omega \gamma_2$ and $(\gamma_1,\gamma_2)\in \mathbb Z^2.$

    A \emph{magic angle} in this model is defined as a parameter $\alpha \in \mathbb C$ such that 
    \[ 0\in \bigcap_{k\in \CC} \Spec_{L^2_0}(H_k(\alpha)).\] 
By Fredholm theory, this implies that $\dim \ker_{L^2_0}(D(\alpha)^*+\bar k)=\dim\ker_{L^2_0} (D(\alpha)+k) \neq 0$ for all $k\in \CC.$ If $\dim \ker_{L^2_0}(D(\alpha)+k)=1$ for all $k\in \CC,$ we call the magic angle \emph{simple} and if $\dim \ker_{L^2_0}(D(\alpha)+k)=2$ for all $k\in \CC,$ we call it \emph{two-fold degenerate}. The existence of simple and two-fold degenerated magic angles are proved in \cite{bhz1,bhz3,walu}. 

Recall that flat bands and the corresponding eigenspaces are given by a family of orthogonal projections that satisfy Assumption \ref{ass:proj} due to the existence of band gaps between flat bands and other bands \cite{bhz1,bhz3}. We consider $V(k):=\ker_{L^2_0}(D(\alpha)+k) \subset L^2_0.$
This allows us to define a trivial bundle $\pi:\tilde E \to \CC,$ where
\[ \Tilde E:=\{(k,v):v\in V(k)\} \subset \CC \times L^2_0(\CC/\Lambda;\CC^2).\]
To define a vector bundle over the torus $\CC/\Lambda^*$, we introduce
\[(k,u)\sim (k+p, \tau(p)u),\  \tau(p)u(z)=e^{i\langle z,p\rangle} u(z),\]
for $p \in \Lambda^*.$ This way $\mathcal E_{\CC}:=\Tilde{E}/{\sim} \to \CC/\Lambda^*$ is a holomorphic vector bundle. At simple magic angles $\rank_{\CC}(\mathcal E_{\CC})=1$, while at two-fold degenerate magic angles $\rank_{\CC}(\mathcal E_{\CC})=2.$ 

We now define the complex Bloch bundle over $\CC/\Lambda^*$ (corresponding to the flat band):
\[ \begin{split}\mathcal F &:=\{(k,\phi):(\CC \times L^2_0(\CC/\Lambda;\CC^4)/{\sim} :\phi \in \indic_{0}(H_k(\alpha))\},\quad 
(k,\phi)\sim(k+p,\tau(p)\phi) \text{ for } p\in \Lambda^*.
\end{split}\]
and the corresponding real Bloch bundle $\mathcal F_0:=\{\varphi \in \mathcal F: I\varphi=\varphi\}.$ Note that such Bloch bundles corresponding to flat bands can be defined using spectral projections due to the existence of band gaps at magic angles (see \cite{bhz2,bhz3}).

To identify the complex bundle $\mathcal E_{\CC}$ with an oriented real bundle of twice the rank, we take some basis of every fiber $u_1$ (for simple magic angles) or $u_1,u_2$ (for degenerate magic angles). Focusing now exclusively on two-fold degenerate magic angles, to streamline the presentation, $u_1,iu_1,u_2,iu_2$ then defines an oriented basis. This one is always consistently oriented by general concepts \cite[Lemma $14.1$]{MS74}. To obtain an oriented basis of the bundle $\mathcal F_0$ associated with the $I$-symmetry that commutes with the Hamiltonian, we just use the symmetry $Qu(z)=\overline{u(-z)}$ and define
\[ (u_1,Qu_1),(iu_1,-iQ(u_1)), (u_2,Q(u_2)), (iu_2,-iQ(u_2)).\]

Since $\mathcal F_0 \cong \mathcal E_{\mathbb C}$ and the Euler class equals the top Chern class, we conclude that $w_1(\mathcal F_{0})=0$ and $e(\mathcal{F}_0)=c_1(\mathcal{E}_{\CC})$. Thus the Euler number of $\mathcal{F}_0$ can be computed by
\[ \chi(\mathcal F_0)=\int_{\mathbb R^2/\Gamma^*} e(\mathcal F_{0}) =\int_{\mathbb R^2/\Gamma^*} c_1(\mathcal E_{\mathbb C}).\]
\subsection{TBG with two flat bands}
At a simple magic angle, we take the real bundle $ \mathcal F_{0}$. We may add a line bundle and define: $\mathcal F_{0}' := \mathcal F_{0}\oplus L.$ 

By Proposition \ref{prop:rankg3}, we can write $\mathcal F_{0}' = \bigoplus_{i=0}^2 L_i$. As the real Bloch bundle $\mathcal F_0$ is orientable, i.e.~$w_1(\mathcal F_0)=0$, we have
\begin{equation}
w_1(L) = w_1(\mathcal F_{0}') = \sum_{i=0}^2 w_1(L_i),  
\end{equation}
as well as 
\begin{equation}
    \begin{split}
w_2(\mathcal F_{0}) = w_2(\mathcal F'_{0}) = \sum_{i<j} w_1(L_i) \smile w_1(L_j).
    \end{split}
\end{equation}

Recall that for a line bundle $L_i$, the rescaled Wannier center $\mathfrak{c}_i\in \Gamma$ is the same as the first Stiefel--Whitney class $w_1(L_i)\in H^1(\RR^2/\Gamma^*; \ZZ/2) = \Gamma/2\Gamma$. Assume $w_1(L)= \mathfrak c\in \Gamma/(2\Gamma)$, then the first condition yields that
\[ \sum_{i=0}^2 \mathfrak c_{i}= \mathfrak c . \]

Let $\omega$ be the generator of $H^2(\mathbb R^2/\Gamma^* ; \ZZ) \cong \mathbb Z$, then the Euler class of $\mathcal F_0$ is given by $e(\mathcal F_0)=-\omega,$ as the Chern number of the associated complex line bundle $\mathcal{E}_{\CC}$ is $-1$. Thus, by equation \eqref{eq:mod2}, the second Stiefel--Whitney class is given by $w_2(\mathcal F_0)=e_1\smile e_2$ with $e_1\smile e_2$ being the generator of $H^2(\mathbb R^2/\Gamma^*,\ZZ/2) \cong \ZZ/2$ as in the proof of Proposition \ref{prop:rankg3}. By the proof of Proposition \ref{prop:rankg3}, we obtain

\begin{corr}
If $L$ is a real line bundle from an isolated band, then we may decompose the real Bloch bundle $\mathcal F_0' =\mathcal F_0 \oplus L$ into a direct sum of three real line bundles $\mathcal F_0' = \bigoplus_{i=0}^2 L_{i}$ with Wannier centers
\[ \mathfrak c_0 = e_1+\mathfrak{c},\ \mathfrak c_1 = e_2+\mathfrak{c}, \text{ and } \mathfrak c_2 = e_1+e_2+\mathfrak{c}.\].
    
\end{corr}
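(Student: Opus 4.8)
The plan is to reduce the statement to the explicit rank-$3$ decomposition constructed in the $\TT^2$ part of the proof of Proposition~\ref{prop:rankg3}, while tracking the first Stiefel--Whitney classes (equivalently, by Proposition~\ref{def:WB-c}, the rescaled Wannier centers) through one auxiliary tensoring step. First I would record the characteristic classes of $\mathcal F_0' = \mathcal F_0 \oplus L$: from $e(\mathcal F_0) = -\omega$ and \eqref{eq:mod2} one has $w_1(\mathcal F_0) = 0$ and $w_2(\mathcal F_0) = e_1 \smile e_2$, and since $L$ is a line bundle $w_1(L) = \mathfrak c$ and $w_2(L) = 0$, so the Whitney sum formula \eqref{eq:SW-sum} yields $w_1(\mathcal F_0') = \mathfrak c$ and $w_2(\mathcal F_0') = e_1 \smile e_2$.

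Since $\mathcal F_0'$ is in general non-orientable, I would next tensor by the real line bundle $\mathcal L_{\mathfrak c}$ of \eqref{eq:line-bundle-Lc}, which has $w_1(\mathcal L_{\mathfrak c}) = \mathfrak c$. By \eqref{eq:SW-tensor}, $w_1(\mathcal F_0' \otimes \mathcal L_{\mathfrak c}) = \mathfrak c + 3\mathfrak c = 0$, so the tensored bundle is orientable; a splitting-principle computation, using that $2 w_1(\mathcal F_0') = 0$ and that $\mathfrak c \smile \mathfrak c = 0$ in $H^2(\TT^2;\ZZ/2)$ (since $e_1 \smile e_1 = e_2 \smile e_2 = 0$ there), shows $w_2$ is unchanged, i.e.\ $w_2(\mathcal F_0' \otimes \mathcal L_{\mathfrak c}) = e_1 \smile e_2$. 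Applying the $\TT^2$ case of the proof of Proposition~\ref{prop:rankg3} with $a = 1$, the orientable rank-$3$ bundle $\mathcal F_0' \otimes \mathcal L_{\mathfrak c}$ splits as $\tilde L_0 \oplus \tilde L_1 \oplus \tilde L_2$ with $w_1(\tilde L_0) = e_1$, $w_1(\tilde L_1) = e_2$, $w_1(\tilde L_2) = e_1 + e_2$, since $(1 + e_1)(1 + e_2)(1 + e_1 + e_2) = 1 + e_1 \smile e_2$.

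Finally, I would tensor this splitting back by $\mathcal L_{\mathfrak c}$: since $\mathcal L_{\mathfrak c} \otimes \mathcal L_{\mathfrak c}$ is trivial, this gives $\mathcal F_0' = \bigoplus_{i=0}^2 L_i$ with $L_i := \tilde L_i \otimes \mathcal L_{\mathfrak c}$, hence $w_1(L_0) = e_1 + \mathfrak c$, $w_1(L_1) = e_2 + \mathfrak c$, $w_1(L_2) = e_1 + e_2 + \mathfrak c$. Choosing the $L_i$ mutually orthogonal (as noted after Proposition~\ref{prop:rankg3}) and invoking Propositions~\ref{def:WB-c} and \ref{prop:Isym}, the rescaled Wannier centers of the resulting exponentially localized Wannier basis compatible with the $I$-symmetry are the representatives $\mathfrak c_i = w_1(L_i) \in \Gamma/2\Gamma$, namely $\mathfrak c_0 = e_1 + \mathfrak c$, $\mathfrak c_1 = e_2 + \mathfrak c$, $\mathfrak c_2 = e_1 + e_2 + \mathfrak c$; as a consistency check, $\mathfrak c_0 + \mathfrak c_1 + \mathfrak c_2 = \mathfrak c = w_1(L)$ in $\Gamma/2\Gamma$. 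The one mildly delicate point is the invariance of $w_2$ under the tensoring, which requires the cohomology ring structure of $\TT^2$; the rest is bookkeeping layered on top of Proposition~\ref{prop:rankg3}.
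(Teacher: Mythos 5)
Your proposal is correct and follows essentially the same route as the paper: compute $w_1(\mathcal F_0')=\mathfrak c$ and $w_2(\mathcal F_0')=e_1\smile e_2$, reduce to the orientable $\TT^2$ case of Proposition~\ref{prop:rankg3} by tensoring with $\mathcal L_{\mathfrak c}$, and untwist at the end. You merely make explicit a step the paper leaves implicit, namely that $w_2$ is unchanged under tensoring because $\mathfrak c\smile\mathfrak c=0$ in $H^2(\TT^2;\ZZ/2)$, which is a worthwhile (and correct) addition.
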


\subsection{TBG with four flat bands}
In case of TBG at a two-fold degenerate magic angle, the real vector bundle $\mathcal F_{0}$ can be written as a direct sum of real line bundles $\mathcal F_{0} = \bigoplus_{i=0}^3 L_i.$ Using the orientability of $\mathcal F_0$, we find that the first Stiefel--Whitney class satisfies
\[\sum_{i=0}^3 w_1(L_i) =w_1(\mathcal F_{0}) =0. \]
As the Chern number of the two-fold degenerate flat band is $-1$ (cf.~\cite[Theorem~5]{bhz3}), we also obtain
\[\sum_{i<j} w_1(L_i) \smile w_1(L_j) = w_2(\mathcal F_{0}) = e_1\smile e_2.\]
Splitting a trivial bundle off the rank four real Bloch bundle $\mathcal F_{0}$ and using the proof of Proposition \ref{prop:rankg3}, we obtain the following
\begin{corr}
We can decompose $\mathcal F_0$ into a direct sum of four real line bundles $\mathcal F_0 = \bigoplus_{i=0}^3 L_{i}$ with Wannier centers at
\[
\mathfrak c_0 = e_1,\   \mathfrak c_1 = e_2,\ \mathfrak c_2 = e_1+e_2, \text{ and }\mathfrak c_3=0.
\]
    
\end{corr}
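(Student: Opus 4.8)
The plan is to turn the sketch preceding the statement into a full argument by chaining together the propositions established in Sections \ref{sec:Chern} and \ref{sec:Frag_topo}, and then to read the Wannier centers off the first Stiefel--Whitney classes of the line bundle summands. The only facts about $\mathcal F_0$ that I would use are that it is orientable, so $w_1(\mathcal F_0)=0$, and that $w_2(\mathcal F_0)=e_1\smile e_2$ is the nonzero generator of $H^2(\TT^2;\ZZ/2)$; the latter holds because $\mathcal F_0$ is the realification of the rank $2$ complex flat-band bundle $\mathcal E_{\CC}$, whose Chern number is $-1$ by \cite[Theorem~5]{bhz3}, and because $w_2$ of a realification is the mod $2$ reduction of $c_1$.

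First I would split off a trivial line bundle. Since $\mathcal F_0$ has rank $4>3$ over $\TT^2$, which has dimension $2\le 3$, the Thom transversality argument used in the proofs of Propositions \ref{prop:chern} and \ref{prop:rankg3} produces a nowhere-vanishing section and hence a trivial real line subbundle $L_3$ with an orthogonal complement $\mathcal F_0''$ of rank $3$; since $w_1(L_3)=0$ this is the source of $\mathfrak c_3=0$, and by the Whitney sum formulas \eqref{eq:SW-sum} the bundle $\mathcal F_0''$ is again orientable with $w_2(\mathcal F_0'')=e_1\smile e_2$. Next I would run the $\TT^2$ case of the proof of Proposition \ref{prop:rankg3} with parameter $a=1$: it exhibits real line bundles $L_0,L_1,L_2$ with $w_1(L_0)=e_1$, $w_1(L_1)=e_2$, $w_1(L_2)=e_1+e_2$, whose direct sum has total Stiefel--Whitney class
\[
(1+e_1)(1+e_2)(1+e_1+e_2)=1+e_1\smile e_2,
\]
which agrees with that of $\mathcal F_0''$.

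To conclude that $\mathcal F_0''\cong L_0\oplus L_1\oplus L_2$, and hence $\mathcal F_0\cong L_0\oplus L_1\oplus L_2\oplus L_3$, I would invoke the classification of Proposition \ref{prop:real_class}: an orientable rank $3$ real bundle over a manifold of dimension $\le 3$ is determined by $w_2$ alone. As remarked after Proposition \ref{prop:rankg3}, the four summands can then be arranged to be mutually orthogonal, and Remark \ref{rem:analytic} allows one to take them real analytic. Finally, by Proposition \ref{prop:Isym} such an orthogonal analytic splitting of $\mathcal F_0$ corresponds to an exponentially localized Wannier basis compatible with the $I$-symmetry, and the rescaled Wannier center attached to $L_i$ is the representative in $\Gamma$ of $w_1(L_i)\in H^1(\RR^2/\Gamma^*;\ZZ/2)\cong\Gamma/2\Gamma$ (equivalently, twice the position expectation, by Proposition \ref{prop:Berry}). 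Substituting the values of $w_1(L_i)$ found above gives $\mathfrak c_0=e_1$, $\mathfrak c_1=e_2$, $\mathfrak c_2=e_1+e_2$ and $\mathfrak c_3=0$.

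Since the argument is essentially bookkeeping on top of earlier results, the one delicate point I would watch is the step where an equality of total Stiefel--Whitney classes is promoted to an honest bundle isomorphism: this is not automatic and relies on the low-dimensional classification of Proposition \ref{prop:real_class}, after the rank has been brought down to $3$ by the trivial split-off (no further reduction to the orientable case is needed here, as $\mathcal F_0$ is already orientable). The remaining ingredients -- the transversality split-off, the orthogonalization of the summands, and the passage to the analytic category -- are routine.
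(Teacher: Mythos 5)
Your proposal is correct and follows essentially the same route as the paper: split off a trivial line bundle from the rank $4$ bundle $\mathcal F_0$, then apply the $\TT^2$ case of Proposition \ref{prop:rankg3} with $w_2(\mathcal F_0)=e_1\smile e_2$ to obtain summands with $w_1=e_1,\ e_2,\ e_1+e_2$, and read off the Wannier centers from the first Stiefel--Whitney classes. Your explicit appeal to Proposition \ref{prop:real_class} to upgrade the equality of total Stiefel--Whitney classes to a bundle isomorphism only makes precise what the paper's argument uses implicitly.
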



\end{document}